\newcommand{\rmnum}[1]{\romannumeral #1}
\newcommand{\Rmnum}[1]{\expandafter\@slowromancap\romannumeral #1@}
\spnewtheorem{claim}{Claim}{\bfseries}{\rmfamily}
\spnewtheorem{remark}{Remark}{\bfseries}{\rmfamily}
\begin{document}

\title{Minimum Enclosing Ball Revisited: Stability and Sub-linear Time Algorithms}
\author{Hu Ding}
\institute{
 School of Computer Science and Engineering, University of Science and Technology of China \\
 He Fei, China\\
  \email{huding@ustc.edu.cn}\\
}

\maketitle

\thispagestyle{empty}


\begin{abstract}
In this paper, we revisit the Minimum Enclosing Ball (MEB) problem and its robust version, MEB with outliers, in Euclidean space $\mathbb{R}^d$. Though the problem has been extensively studied before, most of the existing algorithms need at least linear time (in the number of input points $n$ and the dimensionality $d$) to achieve a $(1+\epsilon)$-approximation. Motivated by some recent developments on beyond worst-case analysis, we introduce the notion of stability for MEB (with outliers), which is natural and easy to understand. Roughly speaking, an instance of MEB  is stable, if the radius of the resulting ball cannot be significantly reduced by removing a small fraction of the input points. Under the stability assumption, we present two sampling algorithms for computing approximate MEB with sample complexities independent of the number of input points $n$. In particular, the second algorithm has the sample complexity even independent of the dimensionality $d$. Further, we extend the idea to achieve a sub-linear time approximation algorithm for the MEB with outliers problem. Note that most existing sub-linear time algorithms for the problems of MEB and MEB with outliers usually result in bi-criteria approximations, where the ``bi-criteria'' means that the solution has to allow the approximations on the radius and the number of covered points. Differently, all the algorithms proposed in this paper yield single-criterion approximations (with respect to radius). 
We expect that our proposed notion of stability and techniques will be applicable to design sub-linear time algorithms for other optimization problems. 
\end{abstract}

%
%
%
  
  \newpage

\pagestyle{plain}
\pagenumbering{arabic}
\setcounter{page}{1}

\section{Introduction}
\label{sec-intro}

Given a set $P$ of $n$ points in Euclidean space $\mathbb{R}^d$, where $d$ could be quite high, the problem of {\em Minimum Enclosing Ball (MEB)} is to find 
a ball with minimum radius to cover all the points in $P$ \cite{badoiu2003smaller,DBLP:journals/jea/KumarMY03,DBLP:conf/esa/FischerGK03}. MEB is a fundamental problem in computational geometry and finds 
applications in many fields such as machine learning and data mining. For example, one of the most popular classification models, {\em Support  Vector Machine (SVM)}, can be formulated as an MEB problem in high dimensional space, and 
fast MEB algorithms can be adopted to speed up 
its training procedure \cite{DBLP:journals/jmlr/TsangKC05,C10,DBLP:journals/jacm/ClarksonHW12}.
Recently, 
MEB has also been used for preserving privacy~\cite{DBLP:conf/pods/NissimSV16,DBLP:conf/ipsn/FeldmanXZR17} and quantum cryptography~\cite{gyongyosi2014geometrical}.

In real-world applications,  we often need to assume the presence of outliers in 
given datasets. MEB with outliers is a natural generalization of the MEB problem, where the goal is to find the minimum ball covering at least a certain  fraction or number of input points; for example, the ball may be required to cover at least $90\%$ of the points and  leave the remaining $10\%$ of points as outliers. The presence of outliers makes the problem  not only non-convex but also highly combinatorial; the high dimensionality of the problem further increases its challenge. 

The problems of MEB and MEB with outliers have been extensively studied before. However, almost all of them need at least linear time (in terms of $n$ and $d$) to obtain a $(1+\epsilon)$-approximation. This is not quite ideal, especially in big data where the size of the dataset could be so large that we cannot even afford to read the whole dataset once. This motivates us to ask 
the following question: is it possible to develop approximation algorithms for MEB (with outliers) that run in sub-linear time in the input size? 
Designing sub-linear time algorithms has become a promising approach to handle many big data problems and has attracted a great deal of attentions in the past decades \cite{rubinfeld2006sublinear,czumaj2006sublinear}. Note that most existing sampling based sub-linear time algorithms for MEB and MEB with outliers often result in errors on the number of covered points; for instance, to guarantee the desired quality of the ball, it may need to cover less than $n$ points (or discard more than the pre-specified number of outliers for MEB with outliers). A detailed discussion on previous works is given in Section~\ref{sec-related}.



\subsection{Our Main Ideas and Results}
\label{sec-overview}

Our idea for designing sub-linear time MEB (with outliers) algorithms is inspired by some recent developments on optimization with respect to stable instances, under the umbrella of {\em beyond worst-case analysis} \cite{DBLP:journals/cacm/Roughgarden19}. 
Many NP-hard optimization problems have shown to be challenging even for approximation, but 
  admit efficient solutions 
in practice. Several recent works tried to explain this phenomenon and introduced the notion of stability for problems like clustering and max-cut~\cite{balcan2013clustering,DBLP:journals/cpc/BiluL12,ostrovsky2012effectiveness,DBLP:conf/focs/AwasthiBS10}. 
In this paper, we give the notion of ``stability'' for MEB. Roughly speaking, an instance of MEB  is stable, if the radius of the resulting ball cannot be significantly reduced by removing a small fraction of the input points ({\em e.g.,} the radius cannot be reduced by $10\%$ if only $1\%$ of the points are removed). The rationale behind 
this notion is quite natural: if the given instance is not stable, the small fraction of points causing  significant reduction in the radius should be viewed as outliers (or they may need to be covered by 
additional balls, {\em e.g.,} $k$-center clustering \cite{hochbaum1985best,gonzalez1985clustering}). To the best of our knowledge, this is the first study on MEB (with outliers) from the perspective of stability.

We prove an important implication of the stability assumption: informally speaking, if an instance of MEB is stable, its center should reveal a certain extent of robustness in the space (Section~\ref{sec-stable}). The implication is useful  not only for designing sub-linear time MEB (with outliers) algorithms, but also for handling incomplete datasets. Using 
this implication,
we propose two sampling algorithms for computing approximate MEB with sub-linear time complexities (Section~\ref{sec-sub}); in particular, our second algorithm has the sample size ({\em i.e.,} the number of sampled points) independent of the input size $n$ and dimensionality $d$. 
We further consider the MEB with outliers problem of stable instances, and obtain a sub-linear time  approximation algorithm in Section~\ref{sec-outlier-stable2}. Besides the improvement on the sample size, another key difference with most existing sub-linear time MEB (with outliers) algorithms is that our algorithms yield \textbf{single-criterion} approximations that have no error on the number of covered points (see our discussion on existing ``bi-criteria approximation'' algorithms in Section~\ref{sec-related}). 

\textbf{Note} that if we arbitrarily select a point from the input dataset, it will be the center of a $2$-approximate MEB by the triangle inequality. But we are more interested in computing a  solution with approximation ratio close to $1$. Moreover, it is challenging to determine the radius of the ball in sub-linear time. In some applications, only estimating the position of the ball center may not be sufficient, and a ball covering all the given points is thus needed ({\em e.g.,} Tsang {\em et al.}~\cite{DBLP:conf/icml/TsangKK07} used enclosing ball to solve the SVM problem). \textbf{In this paper, we aim to determine not only the center of the ball, but also its radius, in sub-linear time}.

\subsection{Related Work}
\label{sec-related}

The works most related to ours  are \cite{alon2003testing,DBLP:journals/jacm/ClarksonHW12}. Alon {\em et al.}~\cite{alon2003testing} studied the following property testing problem: given a set of $n$ points in some metric space, 
determine 
whether the instance is $(k, b)$-clusterable, where an instance is called $(k, b)$-clusterable if it can be covered by $k$ balls with radius (or diameter) $b>0$. They proposed several sampling algorithms to answer the question ``approximately''. Particularly, they distinguish between the case that the instance is $(k, b)$-clusterable and the case that it is $\epsilon$-far away from $(k, b')$-clusterable, where $\epsilon\in (0,1)$ and $b'\geq b$. ``$\epsilon$-far'' means that more than $\epsilon n$ points should be removed so that it becomes $(k, b')$-clusterable. 
Note that their method cannot yield an approximation algorithm for the MEB problem, since it will introduce an unavoidable error on the number of covered points due to the relaxation of ``$\epsilon$-far''. However, it is possible to convert it into a ``\textbf{bi-criteria}'' approximation algorithm for the MEB with outliers problem, where it allows approximations on the radius and the number of uncovered outliers ({\em e.g.,} the solution may discard slightly more than the pre-specified number of outliers). 


Clarkson {\em et al.}~\cite{DBLP:journals/jacm/ClarksonHW12} developed an elegant perceptron framework for solving several optimization problems arising in machine learning, such as MEB.  For a set of $n$ points in $\mathbb{R}^d$ represented as  
an $n\times d$ matrix with $M$ non-zero entries, 
their framework can solve the MEB problem in $\tilde{O}(\frac{n}{\epsilon^2}+\frac{d}{\epsilon})$~\footnote{The asymptotic notation $\tilde{O}(f)=O\big(f\cdot polylog(\frac{nd}{\epsilon})\big)$.} time. Note that the parameter ``$\epsilon$'' is an additive error ({\em i.e.,} the resulting radius is $r+\epsilon$ if $r$ is the radius of the optimal MEB) which can be converted into a relative error ({\em i.e.,} $(1+\epsilon)r$) in $O(M)$ preprocessing time. Thus, if $M=o(nd)$, the running time is still sub-linear in the input size $nd$. Our algorithms have different sub-linear time complexities which are independent of the number of input points. 

%

\textbf{MEB and core-set.} A {\em core-set}~\cite{agarwal2005geometric} is a small set of points that approximates the structure/shape of a much larger point set, and thus can be used to significantly reduce the time complexities for many optimization problems (the reader is referred to the recent surveys~\cite{DBLP:journals/corr/Phillips16,DBLP:journals/widm/Feldman20} for more details on core-sets). The core-set idea  has also been used to approximate the MEB problem in high dimensional space~\cite{BHI,DBLP:journals/jea/KumarMY03,panigrahy2004minimum,DBLP:conf/isaac/KerberS13}. B\u{a}doiu and Clarkson \cite{badoiu2003smaller} showed that it is possible to find a core-set of size $\lceil2/\epsilon\rceil$ that yields a $(1+\epsilon)$-approximate MEB
; later,  they \cite{coresets1} further proved that actually only $\lceil 1/\epsilon\rceil$ points are sufficient, but their core-set construction is more complicated. In fact, the algorithm for computing the core-set of MEB is a {\em Frank-Wolfe} style algorithm~\cite{frank1956algorithm}, which has been systematically studied by Clarkson~\cite{C10}. 
There are also several exact and approximation algorithms for MEB that do not rely on core-sets
~\cite{DBLP:conf/esa/FischerGK03,DBLP:conf/soda/SahaVZ11,DBLP:conf/icalp/ZhuLY16}. Most of these algorithms have linear time complexities. Agarwal and Sharathkumar~\cite{DBLP:journals/algorithmica/AgarwalS15} presented a streaming $(\frac{1+\sqrt{3}}{2}+\epsilon)$-approximation algorithm for computing MEB; later, Chan and Pathak~\cite{DBLP:journals/comgeo/ChanP14} proved that the same algorithm actually yields an approximation ratio less than $1.22$.

\textbf{MEB with outliers and bi-criteria approximations.} 
The MEB with outliers problem is a generalization of the MEB problem that allows to discard a certain fraction of points as outliers; it also can be viewed as the case $k=1$ of the $k$-center clustering with outliers problem~\cite{charikar2001algorithms}. 
B\u{a}doiu {\em et al.}~\cite{BHI} extended their core-set idea to the problems of MEB and $k$-center clustering with outliers, and achieved linear time bi-criteria approximation algorithms (if $k$ is assumed to be a constant). Recently, Huang {\em et al.}~\cite{DBLP:conf/focs/HuangJLW18} and  Ding {\em et al.}~\cite{DBLP:journals/corr/abs-1901-08219} respectively showed that the simple uniform sampling approach yields bi-criteria approximation of $k$-center clustering with outliers, but their sample sizes both depend on the dimensionality $d$. In~\cite{DBLP:journals/corr/abs-2004-10090}, we  proposed a unified framework for designing sub-linear time bi-criteria approximation algorithms for several high-dimensional geometric optimization problems with outliers, where the sample sizes can be improved to be independent of $d$. 
Several algorithms for the low dimensional MEB with outliers problem have also been developed~\cite{aggarwal1991finding,efrat1994computing,har2005fast,matouvsek1995enclosing}. 
 Furthermore, there exist a number of works on streaming MEB and $k$-center clustering with outliers, such as~\cite{charikar2003better,mccutchen2008streaming,zarrabistreaming}.

\textbf{Optimizations under stability.} Bilu and Linial~\cite{DBLP:journals/cpc/BiluL12} showed that the Max-Cut problem becomes easier if the given instance is stable with respect to perturbation on edge weights. Ostrovsky {\em et al.}~\cite{ostrovsky2012effectiveness} proposed a separation condition for $k$-means clustering which refers to the scenario where  the clustering cost of $k$-means  is significantly lower than that of $(k-1)$-means  for a given instance, and demonstrated the effectiveness of the Lloyd heuristic \cite{lloyd1982least} under the separation condition. Balcan {\em et al.}~\cite{balcan2013clustering} introduced the concept of approximation-stability for finding the ground-truth of $k$-median and $k$-means clustering. 
Awasthi {\em et al.}~\cite{DBLP:conf/focs/AwasthiBS10} introduced another notion of clustering stability and gave a PTAS for $k$-median and $k$-means clustering. More algorithms on clustering problems under stability assumption were studied in~\cite{DBLP:journals/ipl/AwasthiBS12,DBLP:journals/siamcomp/BalcanL16,DBLP:conf/icalp/BalcanHW16,DBLP:conf/colt/BalcanB09,kumar2010clustering}.

\textbf{Sub-linear time algorithms.} Indyk presented sub-linear time algorithms for several metric space problems, such as $k$-median clustering~\cite{DBLP:conf/stoc/Indyk99} and $2$-clustering~\cite{DBLP:conf/focs/Indyk99}. More sub-linear time clustering algorithms have been studied in~\cite{meyerson2004k,mishra2001sublinear,czumaj2004sublinear}. Another important motivation for designing sub-linear time algorithms is property testing. For example, Goldreich {\em et al.}~\cite{DBLP:journals/jacm/GoldreichGR98} focused on using small sample to test some natural graph properties. More detailed discussion on sub-linear time algorithms can be found in the survey papers~\cite{rubinfeld2006sublinear,czumaj2006sublinear}.

\vspace{-0.08in}

\section{Definitions and Preliminaries}
\label{sec-pre}
\vspace{-0.08in}

In this paper, we let $|A|$ denote the number of points of a given point set $A$ in $\mathbb{R}^d$, and $||x-y||$ denote the Euclidean distance between two points $x$ and $y$ in $\mathbb{R}^d$. We use $\mathbb{B}(c, r)$ to denote the ball centered at a point $c$ with radius $r>0$. Below, we first give the definitions of MEB and the notion of stability. 

\begin{definition}[Minimum Enclosing Ball (MEB)]
\label{def-meb}
Given a set $P$ of $n$ points in $\mathbb{R}^d$, the MEB problem is to find a ball with minimum radius to cover all the points in $P$. The resulting ball and its radius are denoted by $MEB(P)$ and $Rad(P)$, respectively.
\end{definition}

A ball $\mathbb{B}(c, r)$ is called a $\lambda$-approximation of $MEB(P)$ for some $\lambda\geq 1$, if the ball covers all points in $P$ and has radius $r\leq \lambda Rad(P)$. Unless otherwise specified, we always assume that $\epsilon$ is a fixed number in $(0,1)$ that will be often used to measure the error or change on radius in the rest of the paper.

\begin{definition}[$\beta_\epsilon$-stable]
\label{def-stable}
Given a set $P$ of $n$ points in $\mathbb{R}^d$ with a small parameter $\beta_\epsilon\in (0,1)$,  $P$ is a $\beta_\epsilon$-stable instance if $Rad(P')\geq (1-\epsilon)Rad(P)$ for any $P'\subset P$ with $|P'|\geq (1-\beta_\epsilon)n$. 
\end{definition}


\textbf{The intuition of Definition~\ref{def-stable}.} The property of stability indicates that $Rad(P)$ cannot be significantly reduced unless removing a large enough fraction of points from $P$. In particular, the larger $\beta_\epsilon$ is, the more stable $P$ becomes.   Actually, our stability assumption is quite reasonable  in practice. For example, if the radius of MEB can be reduced considerably (say by $\epsilon=10\%$) after removing only a very small fraction (say $\beta_\epsilon=1\%$) of  points, 
it is natural to view the small fraction of points as outliers. 
To better understand the notion of stability in high dimensions, we consider the following two examples.

\textbf{ Example \rmnum{1}.} Suppose that the distribution of $P$ is uniform and dense inside $MEB(P)$. If we want the radius of the remaining $1-\beta_\epsilon$ points to be as small as possible, intuitively we should remove the outermost $\beta_\epsilon$ points (since $P$ is uniform and dense).  Let $P'$  denote  the set of innermost $1-\beta_\epsilon$ points. Assume $Rad(P')=(1-\epsilon)Rad(P)$. Then we have $\frac{|P'|}{|P|}\approx \frac{Vol\big(MEB(P')\big)}{Vol\big(MEB(P)\big)}=\frac{(Rad(P'))^d}{(Rad(P))^d}= (1-\epsilon)^d$, where $Vol(\cdot)$ is the volume function. That is, $1-\beta_\epsilon\approx(1-\epsilon)^d$ and thus $\lim_{d\to\infty}\beta_\epsilon=1$; that means $P$ tends to be very stable as $d$ increases in this case. 

\textbf{ Example \rmnum{2}.} Consider a regular $d$-dimensional  simplex $P$ containing $d+1$ points where each pair of points have the pairwise distance equal to $1$. It is known that $Rad(P)$ is $r_d=\sqrt{\frac{d}{2(1+d)}}$. If we remove $\beta_\epsilon (d+1)$ points from $P$, namely it becomes a regular $d'$-dimensional simplex with $d'=(1-\beta_\epsilon)(d+1)-1$, the new radius $r_{d'}=\sqrt{\frac{d'}{2(1+d')}}$. To obtain $\frac{r_{d'}}{r_d}\leq 1-\epsilon$, it is easy to see that $1-\beta_\epsilon\leq\frac{1}{1+(2\epsilon-\epsilon^2) d}$ and thus $\lim_{d\to\infty}\beta_\epsilon=1$. Similar to example \rmnum{1}, the instance  
$P$ tends to be very stable as $d$ increases. 

\begin{remark}
In practice, it may be challenging to obtain the exact value of $\beta_\epsilon$ for a given instance. However, the value of $\beta_\epsilon$ only affects the sample size in our following proposed algorithms ({\em e.g.,} the algorithms in Section~\ref{sec-sub}), and it is sufficient to only estimate a lower bound instead of the exact value. Moreover, it is possible to estimate a reasonable range of $\beta_\epsilon$ by using some prior knowledge on the input data or the techniques proposed in~\cite{alon2003testing}.
\end{remark}

%
%

%


\begin{definition}[MEB with Outliers]
\label{def-outlier}
Given a set $P$ of $n$ points in $\mathbb{R}^d$ and a small parameter $\gamma\in (0,1)$, the MEB with outliers problem is to find the smallest ball that covers $(1-\gamma)n$ points. Namely, the task is to find a subset of $P$ with size $(1-\gamma)n$ such that the resulting MEB is the smallest among all possible choices of the subset. The obtained ball is denoted by $MEB(P, \gamma)$.
\end{definition}

For convenience, we  use $P_{\textnormal{opt}}$ to denote the optimal subset of $P$ with respect to $MEB(P, \gamma)$. That is,
$P_{\textnormal{opt}}=\arg_{Q}\min\Big\{Rad(Q)\mid Q\subset P, \left|Q\right|= (1-\gamma)n\Big\}$. 
From Definition~\ref{def-outlier}, we can see that the main issue is to determine the subset of $P$. We also extend Definition~\ref{def-stable} of MEB to MEB with outliers. 

%
%
%
%
%
%
%

%

\begin{definition}[$\beta_\epsilon$-stable for MEB with Outliers]
\label{def-outlier-stable}
Let $\beta_\epsilon\in (0,1)$. 
Given an instance $(P, \gamma)$ of the  MEB with outliers problem in Definition~\ref{def-outlier}, $(P, \gamma)$ is a $\beta_\epsilon$-stable instance if $Rad(P')\geq (1-\epsilon)Rad(P_{opt})$ for any $P'\subset P$ with $|P'|\geq \big(1-\gamma-\beta_\epsilon\big)n$.
\end{definition}

Definition~\ref{def-outlier-stable} directly implies the following claim.

\begin{claim}
\label{cla-outlier-stable}
If $(P, \gamma)$ is a $\beta_\epsilon$-stable instance of the problem of MEB with outliers, the corresponding $P_{opt}$ is a $\frac{\beta_\epsilon}{1-\gamma}$-stable instance of MEB.
\end{claim}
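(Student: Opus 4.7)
The plan is to unwind both definitions and verify that the threshold conditions match up exactly, so the claim becomes a one-line consequence of Definition~\ref{def-outlier-stable}.

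First I would fix an arbitrary $P'\subset P_{\textnormal{opt}}$ with $|P'|\geq \bigl(1-\tfrac{\beta_\epsilon}{1-\gamma}\bigr)|P_{\textnormal{opt}}|$; this is the generic subset that appears in Definition~\ref{def-stable} when the ambient set is $P_{\textnormal{opt}}$. Using $|P_{\textnormal{opt}}|=(1-\gamma)n$, I would simplify
\[
\Bigl(1-\tfrac{\beta_\epsilon}{1-\gamma}\Bigr)|P_{\textnormal{opt}}| \;=\; (1-\gamma)n-\beta_\epsilon n \;=\; (1-\gamma-\beta_\epsilon)\, n,
\]
so $|P'|\geq (1-\gamma-\beta_\epsilon)n$. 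Since $P'\subset P_{\textnormal{opt}}\subset P$, the set $P'$ also qualifies as a subset of $P$ meeting the size threshold in Definition~\ref{def-outlier-stable}.

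Next I would invoke the $\beta_\epsilon$-stability of $(P,\gamma)$ directly on $P'$: Definition~\ref{def-outlier-stable} gives $Rad(P')\geq (1-\epsilon)\,Rad(P_{\textnormal{opt}})$. This is exactly the conclusion required by Definition~\ref{def-stable} applied to the instance $P_{\textnormal{opt}}$ with parameter $\tfrac{\beta_\epsilon}{1-\gamma}$, and since $P'$ was arbitrary the claim follows.

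There is essentially no obstacle; the only thing to be careful about is the bookkeeping between the ``fraction of $P$'' used in Definition~\ref{def-outlier-stable} and the ``fraction of $P_{\textnormal{opt}}$'' used in Definition~\ref{def-stable}, which is handled by the identity $|P_{\textnormal{opt}}|=(1-\gamma)n$. I would also briefly note that $\tfrac{\beta_\epsilon}{1-\gamma}\in(0,1)$ whenever $\beta_\epsilon<1-\gamma$, which is the natural regime of interest (otherwise the stability condition on $P_{\textnormal{opt}}$ becomes vacuous).
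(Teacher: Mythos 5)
Your proof is correct and is essentially the paper's argument: both reduce to the identity $\bigl(1-\tfrac{\beta_\epsilon}{1-\gamma}\bigr)|P_{\textnormal{opt}}|=(1-\gamma-\beta_\epsilon)n$ and then apply Definition~\ref{def-outlier-stable} to the subset $P'\subset P_{\textnormal{opt}}\subset P$. The only difference is presentational --- the paper phrases it as a proof by contradiction while you argue directly --- which is immaterial.
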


To see the correctness of 
Claim~\ref{cla-outlier-stable}, we can use contradiction. 
Suppose that there exists a subset $P'\subset P_{opt}$ such that $|P'|\geq (1-\frac{\beta_\epsilon}{1-\gamma})|P_{opt}|=(1-\gamma-\beta_\epsilon)n$ and $Rad(P')<(1-\epsilon)Rad(P_{opt})$. Then, it is in contradiction to the fact that $(P, \gamma)$ is a $\beta_\epsilon$-stable instance of MEB with outliers. 


\vspace{-0.05in}
\subsection{Core-set Construction for MEB~\cite{badoiu2003smaller}}
\label{sec-newanalysis}
\vspace{-0.05in}
To compute a $(1+\epsilon)$-approximate MEB, B\u{a}doiu and Clarkson proposed an algorithm yielding an MEB core-set of size $2/\epsilon$ (for convenience, we always assume that $2/\epsilon$ is an integer)~\cite{badoiu2003smaller}. 
We first briefly introduce their main idea, since it will be  used in our proposed algorithms. 
 



Given a point set $P\subset\mathbb{R}^d$, the algorithm is a simple iterative procedure. Initially, it selects an arbitrary point from $P$ and places it into an initially empty set $T$. 
In each of the following $2/\epsilon$ iterations, the algorithm updates the center of $MEB(T)$ and adds to $T$ the farthest point from the current center of $MEB(T)$. 
Finally, the center of $MEB(T)$ induces a $(1+\epsilon)$-approximation for $MEB(P)$. The selected set of $2/\epsilon$ points ({\em i.e.}, $T$) is called the core-set of MEB. However, computing the exact center of $MEB(T)$ could be expensive; in practice, one may only compute an approximate center of $MEB(T)$ in each iteration. 
In the $i$-th iteration, we let $c_i$ denote the exact center of $MEB(T)$; also, let $r_i$ be the radius of $MEB(T)$. Suppose $\xi$ is a given number in $(0,1)$. Using another algorithm proposed in~\cite{badoiu2003smaller}, one can compute an approximate center $o_i$ having the distance to $c_i$ less than $\xi r_i$ in $O(\frac{1}{\xi^2}|T|d)$ time. Since we only compute $o_i$ rather than $c_i$ in each iteration, we in fact only select the farthest point to $o_i$ (not $c_i$). 
%
In~\cite{DBLP:journals/corr/abs-2004-10090}, we studied this modification on B\u{a}doiu and Clarkson's method and presented the following theorem. For the sake of completeness, we place the proof from~\cite{DBLP:journals/corr/abs-2004-10090} in Section~\ref{sec-proof-newbc}. 


\begin{theorem}[\cite{DBLP:journals/corr/abs-2004-10090}]
\label{the-newbc}
In the core-set construction algorithm of~\cite{badoiu2003smaller}, if one computes an approximate MEB for $T$ in each iteration and the resulting center $o_i$ has the distance to $c_i$ less than $\xi r_i= s\frac{\epsilon}{1+\epsilon} r_i$ for some $s\in(0,1)$, the final core-set size is bounded by $z=\frac{2}{(1-s)\epsilon}$. Also, the bound could be arbitrarily close to $2/\epsilon$ when $s$ is small enough. 
\end{theorem}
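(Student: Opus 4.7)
The plan is to reduce the analysis to the classical B\u{a}doiu--Clarkson (BC) core-set argument with an effective accuracy parameter $\epsilon':=(1-s)\epsilon$, and then to absorb the approximation error of $o_i$ through the triangle inequality. I would first set up the iteration: after $i$ rounds let $T_i$ be the current core-set, $c_i$ and $r_i$ the exact center and radius of $MEB(T_i)$, and $o_i$ the approximate center with $\|o_i-c_i\|\le \xi r_i$ where $\xi=s\frac{\epsilon}{1+\epsilon}$. The point added in round $i+1$ is $q_{i+1}:=\arg\max_{p\in P}\|p-o_i\|$. The ``progress condition'' I would isolate is $\|q_{i+1}-o_i\|>(1+\epsilon)r_i$: once it fails, the ball $\mathbb{B}\bigl(o_i,(1+\epsilon)r_i\bigr)$ contains all of $P$ and, since $r_i\le Rad(P)$, yields a $(1+\epsilon)$-approximation of $MEB(P)$.

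Next, I would translate this condition into one about the exact center $c_i$. Under the progress condition, the triangle inequality gives
\[
\|q_{i+1}-c_i\|\ge \|q_{i+1}-o_i\|-\|o_i-c_i\| > (1+\epsilon)r_i-\xi r_i = \Big(1+\epsilon-\tfrac{s\epsilon}{1+\epsilon}\Big)r_i \ge \bigl(1+(1-s)\epsilon\bigr)r_i=(1+\epsilon')r_i,
\]
where the last inequality reduces to $\frac{1}{1+\epsilon}\le 1$. Hence, as long as the algorithm keeps going, the newly added point $q_{i+1}$ lies outside $\mathbb{B}\bigl(c_i,(1+\epsilon')r_i\bigr)$ around the exact center---exactly the condition the classical BC analysis requires, but with $\epsilon$ replaced by $\epsilon'$.

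I would then invoke the classical BC progress lemma, which shows that each such step forces $r_{i+1}$ to grow by enough that the process can continue for at most $\tfrac{2}{\epsilon'}=\tfrac{2}{(1-s)\epsilon}$ iterations before $r_i$ would have to exceed $Rad(P)$, a contradiction since $T_i\subseteq P$. Consequently, the algorithm must reach the stopping condition within $z=\frac{2}{(1-s)\epsilon}$ rounds; the final ball $\mathbb{B}(o_i,\|q_{i+1}-o_i\|)$ has radius at most $(1+\epsilon)r_i\le (1+\epsilon)Rad(P)$, giving the claimed approximation. The remark that the bound approaches $2/\epsilon$ follows immediately from $\lim_{s\to 0^+}\frac{2}{(1-s)\epsilon}=\frac{2}{\epsilon}$.

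The main obstacle is ensuring that the triangle-inequality bookkeeping with the specific scaling $\xi=s\frac{\epsilon}{1+\epsilon}$ produces the effective parameter $(1-s)\epsilon$ cleanly, with no leftover constants: the computation above shows that this choice is precisely tailored for this. A secondary concern is verifying that the classical BC growth bound depends on $\epsilon$ only through the condition $\|q_{i+1}-c_i\|>(1+\epsilon)r_i$, so that substituting $\epsilon'$ yields the same amortized increment per round; this is standard and follows from the geometry of the smallest enclosing ball after a far point is inserted into $T_i$.
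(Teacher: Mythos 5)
Your reduction has a genuine gap at its central step: the ``classical BC progress lemma'' does not have the form you invoke. In the B\u{a}doiu--Clarkson analysis the progress condition is anchored to the \emph{fixed} optimal radius, $\|q_{i+1}-c_i\|>(1+\epsilon)Rad(P)$; combined with $q_{i+1}\in T_{i+1}$ this gives $r_{i+1}\ge (1+\epsilon)Rad(P)-L_i$, which together with $r_{i+1}\ge\sqrt{r_i^2+L_i^2}$ drives the additive recurrence $\gamma_{i+1}\ge\gamma_i+\tfrac12$ for $\gamma_i=(1-\lambda_i)^{-1}$, $\lambda_i=r_i/((1+\epsilon)Rad(P))$, and the cap $\lambda_i\le 1/(1+\epsilon)$ then yields the $2/\epsilon$ bound. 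Your condition $\|q_{i+1}-c_i\|>(1+\epsilon')r_i$ is anchored to the \emph{current} radius instead. Running the same two inequalities with that condition gives only
\begin{equation*}
r_{i+1}\;\ge\;\min_{L\ge 0}\max\Bigl\{\sqrt{r_i^2+L^2},\;(1+\epsilon')r_i-L\Bigr\}\;=\;\frac{(1+\epsilon')^2+1}{2(1+\epsilon')}\,r_i\;=\;\Bigl(1+\tfrac{(\epsilon')^2}{2(1+\epsilon')}\Bigr)r_i,
\end{equation*}
i.e.\ multiplicative growth by $1+\Theta((\epsilon')^2)$ per round. Since $r_i\le Rad(P)$ throughout (and $r_1\ge Rad(P)/2$), this only bounds the number of iterations by $O(1/(\epsilon')^2)$, not by $2/((1-s)\epsilon)$. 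So the ``secondary concern'' you set aside as standard is precisely where the argument breaks.

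The repair is to anchor the stopping test to $(1+\epsilon)Rad(P)$ rather than to $(1+\epsilon)r_i$, which is exactly the convention in Remark~\ref{rem-newbc}: if no point of $P$ lies outside $\mathbb{B}\bigl(o_i,(1+\epsilon)Rad(P)\bigr)$ a $(1+\epsilon)$-approximation is already in hand, and otherwise the triangle inequality gives $r_{i+1}>(1+\epsilon)Rad(P)-L_i-\xi r_i$. At that point your black-box idea does go through: since $r_i\le Rad(P)$ and $\xi=s\epsilon/(1+\epsilon)$, the added point satisfies $\|q_{i+1}-c_i\|>(1+\epsilon-\xi)Rad(P)\ge\bigl(1+(1-s)\epsilon\bigr)Rad(P)$, so you may cite the exact-center analysis with parameter $(1-s)\epsilon$ and recover $z=\frac{2}{(1-s)\epsilon}$. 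The paper's own proof instead carries the slack $\xi r_i$ through the entire recurrence, obtaining $\lambda_{i+1}^2\ge\lambda_i^2+\frac{\bigl((1-\xi\lambda_i)^2-\lambda_i^2\bigr)^2}{4(1-\xi\lambda_i)^2}$ and the substitution $\gamma_i=\bigl(1-\tfrac{\lambda_i}{1-\xi}\bigr)^{-1}$, arriving at $i<\frac{2}{\epsilon-\xi-\epsilon\xi}=\frac{2}{(1-s)\epsilon}$; once the progress condition is stated against $Rad(P)$, your reduction is the cleaner of the two routes.
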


\begin{remark}
\label{rem-newbc}
\textbf{(\rmnum{1})} We can simply set $s$ to be any constant in $(0,1)$; for instance, if $s=1/3$, the core-set size will be bounded by $z=3/\epsilon$. Since $|T|\leq z$, the total running time is $O\Big(z\big(|P|d+\frac{1}{\xi^2}zd\big)\Big)=O\Big(\frac{1}{\epsilon}\big(|P|+\frac{1}{\epsilon^3}\big)d\Big)$. 
\textbf{(\rmnum{2})} We also want to emphasize a simple observation mentioned in~\cite{DBLP:journals/corr/abs-2004-10090} on the above core-set construction procedure, which will be used in our algorithms and analysis later on. The algorithm always selects the farthest point to $o_i$ in each iteration. However, this is actually not necessary. As long as the selected point has distance at least $(1+\epsilon)Rad(P)$, the result presented in Theorem~\ref{the-newbc} is still true.
 If no such a point exists ({\em i.e.,} $P\setminus \mathbb{B}\big(o_i, (1+\epsilon)Rad(P)\big)=\emptyset$), a $(1+\epsilon)$-approximate MEB ({\em i.e.,} $\mathbb{B}\big(o_i, (1+\epsilon)Rad(P)\big)$) has already been obtained.  
\end{remark}

\section{Implication of the Stability Property}
\label{sec-stable}

\begin{wrapfigure}{r}{0.28\textwidth}
 \vspace{-0.45in}
\begin{center}
    \includegraphics[width=0.15\textwidth]{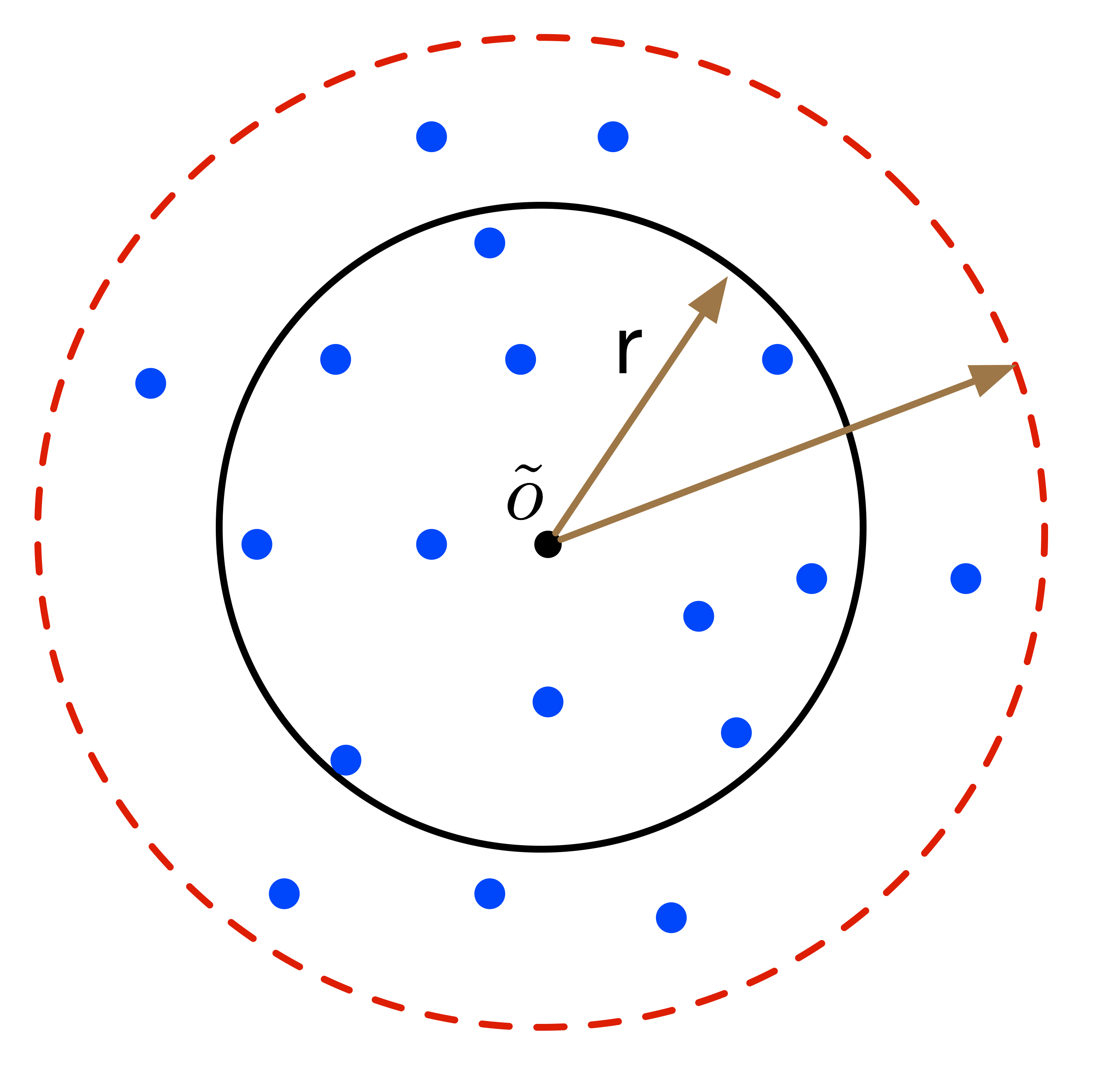}  
    \end{center}
  \vspace{-12pt}
  \caption{We expand $\mathbb{B}(\tilde{o}, r)$, and the larger ball with radius $\frac{1+(2+\sqrt{2})\sqrt{\epsilon}}{1-\epsilon} r$ is an approximate MEB of $P$.}     
   \label{fig-im}
     \vspace{-25pt}
\end{wrapfigure}

In this section, we show an important implication of the stability property described in Definition~\ref{def-stable}.

\begin{theorem}
\label{the-stable}
Let $P$ be a  $\beta_\epsilon$-stable instance of the MEB problem, and $o$ be the center of its MEB. 
Let $\tilde{o}$ be a given point in $\mathbb{R}^d$. Assume the number $r\leq (1+\epsilon)Rad(P)$. If the ball $\mathbb{B}\big(\tilde{o}, r\big)$ covers at least $(1-\beta_\epsilon)n$ points from $P$, the following holds
\begin{eqnarray}
||\tilde{o}-o||&<&  (2+\sqrt{2})\sqrt{\epsilon} Rad(P)=O(\sqrt{\epsilon}) Rad(P).\label{for-stable}
\end{eqnarray}
\end{theorem}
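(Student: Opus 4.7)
The plan is to introduce the witness subset $Q = P \cap \mathbb{B}(\tilde{o}, r)$, which has $|Q| \geq (1-\beta_\epsilon) n$ by assumption, and let $c'$ and $R'$ denote the center and radius of $MEB(Q)$. The stability hypothesis (Definition~\ref{def-stable}) then gives $R' \geq (1-\epsilon) Rad(P)$. The key observation is that $Q$ is simultaneously enclosed by three balls: $MEB(Q)$, the ball $\mathbb{B}(o, Rad(P))$ (since $Q \subset P$), and the ball $\mathbb{B}(\tilde{o}, r)$. The idea is to use this to pin $c'$ close to both $o$ and $\tilde{o}$, and then conclude with the triangle inequality.

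The main tool is a standard ``Pythagorean-style'' consequence of the convex-hull characterization of MEB centers: because $c'$ lies in the convex hull of the boundary points of $MEB(Q)$, for any reference point $c'' \in \mathbb{R}^d$ one can choose a boundary point $q \in Q$ lying on the same side as $c'$ relative to $c''$, and a short inner-product expansion of $||q - c''||^2$ then yields
\begin{equation*}
 \max_{q \in Q} ||q - c''||^2 \;\geq\; (R')^2 + ||c' - c''||^2.
\end{equation*}
Applying this with $c'' = o$ and using $\max_{q \in Q} ||q - o|| \leq Rad(P)$ produces $||c' - o||^2 \leq Rad(P)^2 - (1-\epsilon)^2 Rad(P)^2 < 2\epsilon\, Rad(P)^2$, so $||c' - o|| < \sqrt{2\epsilon}\, Rad(P)$. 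Applying it with $c'' = \tilde{o}$ and using $\max_{q \in Q} ||q - \tilde{o}|| \leq r \leq (1+\epsilon) Rad(P)$ produces $||c' - \tilde{o}||^2 \leq (1+\epsilon)^2 Rad(P)^2 - (1-\epsilon)^2 Rad(P)^2 = 4\epsilon\, Rad(P)^2$, so $||c' - \tilde{o}|| \leq 2\sqrt{\epsilon}\, Rad(P)$. The triangle inequality then yields $||\tilde{o} - o|| \leq ||c' - o|| + ||c' - \tilde{o}|| < (2+\sqrt{2})\sqrt{\epsilon}\, Rad(P)$, which is exactly the claimed bound.

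The main obstacle is establishing the Pythagorean-style inequality above; the remainder of the argument is bookkeeping and elementary algebra. Although this inequality is folklore in the computational geometry literature, its proof combines the convex-hull characterization of the $MEB(Q)$ center with a careful inner-product expansion to select the correct boundary point of $Q$. Note also that we really do need the stability hypothesis (rather than merely the fact that $\mathbb{B}(\tilde{o}, r)$ covers $(1-\beta_\epsilon) n$ points of $P$) to guarantee $R' \geq (1-\epsilon) Rad(P)$, without which the resulting bound on $||c' - o||$ would be vacuous.
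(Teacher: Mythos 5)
Your proposal is correct and follows essentially the same route as the paper: you bridge $\tilde{o}$ and $o$ through the center of the MEB of the covered subset $P\cap\mathbb{B}(\tilde{o},r)$, establish the same two bounds $\sqrt{2\epsilon}\,Rad(P)$ and $2\sqrt{\epsilon}\,Rad(P)$, and combine them by the triangle inequality. The only difference is cosmetic: the paper proves the bound on the distance to $o$ by a separate sphere-intersection argument (Lemma~\ref{lem-stable1}), whereas you derive both bounds uniformly from the half-space property of MEB centers (Lemma 2.2 of~\cite{BHI}), which the paper invokes only for the bound on the distance to $\tilde{o}$; your unification is valid and slightly cleaner.
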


%

Theorem~\ref{the-stable} indicates that if a ball covers a large enough subset of $P$ and its radius is bounded, its center should be close to the center of $MEB(P)$. 
Actually, besides using it to design our sub-linear time MEB algorithms later, Theorem~\ref{the-stable} is also useful in other practical scenarios. Suppose we have $\beta_\epsilon n$ (or less) missing points from $P$. We compute a $(1+\epsilon)$-approximate MEB of the remaining $(1-\beta_\epsilon) n$ points and use $\mathbb{B}(\tilde{o}, r)$ to denote the obtained ball. Since the ball is a $(1+\epsilon)$-approximate MEB of a subset of $P$, we have $r\leq (1+\epsilon)Rad(P)$. Moreover, due to Definition~\ref{def-stable}, we know $r\geq (1-\epsilon)Rad(P)$. Together with Theorem~\ref{the-stable}, we have
\begin{eqnarray}
P\underbrace{\subset}_{\text{by }(\ref{for-stable})} \mathbb{B}\Big(\tilde{o}, \big(1+(2+\sqrt{2})\sqrt{\epsilon}\big) Rad(P)\Big)\underbrace{\subset}_{\text{by } r\geq (1-\epsilon)Rad(P)} \mathbb{B}\Big(\tilde{o}, \frac{1+(2+\sqrt{2})\sqrt{\epsilon}}{1-\epsilon} r\Big)
\end{eqnarray}
and the radius $\frac{1+(2+\sqrt{2})\sqrt{\epsilon}}{1-\epsilon} r\leq \frac{1+(2+\sqrt{2})\sqrt{\epsilon}}{1-\epsilon}(1+\epsilon)Rad(P)=\frac{1+O(\sqrt{\epsilon})}{1-\epsilon}Rad(P)$. 
That is, the ball $ \mathbb{B}\Big(\tilde{o}, \frac{1+(2+\sqrt{2})\sqrt{\epsilon}}{1-\epsilon} r\Big)$ is a $\frac{1+O(\sqrt{\epsilon})}{1-\epsilon}$-approximate MEB of $P$ (see Figure~\ref{fig-im}). Note that we cannot directly use $\mathbb{B}\Big(\tilde{o}, \big(1+(2+\sqrt{2})\sqrt{\epsilon}\big) Rad(P)\Big)$ since we do not know the value of $Rad(P)$. That is, we have to determine the radius based on the known values $r$ and $\epsilon$. Based on the above analysis, even if we have $\beta_\epsilon n$ missing points, we are still  able to compute an approximate MEB of $P$ with the approximation ratio $\frac{1+O(\sqrt{\epsilon})}{1-\epsilon}=1+O(\sqrt{\epsilon})$ (if $\epsilon$ is small enough). 


Now, we prove Theorem~\ref{the-stable}. Let $P'=\mathbb{B}\big(\tilde{o}, r\big)\cap P$, and assume $o'$ is the center of $MEB(P')$.
To bound the distance between $\tilde{o}$ and $o$, we need to bridge them by the point $o'$ (since $||\tilde{o}-o||\leq ||\tilde{o}-o'||+||o'-o||$).  
The following are two key lemmas to the proof.

\begin{lemma}
\label{lem-stable1}
The distance $||o'-o||\leq \sqrt{2\epsilon} Rad(P)$.
\end{lemma}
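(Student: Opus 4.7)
\medskip

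My plan is to exploit the two facts we know about $MEB(P')$ versus $MEB(P)$: first, $P'\subset P$ gives $Rad(P')\leq Rad(P)$, and second, the $\beta_\epsilon$-stability assumption combined with $|P'|\geq(1-\beta_\epsilon)n$ gives $Rad(P')\geq(1-\epsilon)Rad(P)$. So the two radii are close. The plan is to translate this closeness of radii into closeness of the centers $o$ and $o'$, by locating a single witness point $p\in P'$ whose distances to $o$ and to $o'$ together pin down $||o-o'||$.

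The crucial ingredient will be a standard structural property of the minimum enclosing ball: the center $o'$ lies in the convex hull of the contact points $P'\cap\partial MEB(P')$. Consequently, for \emph{every} unit direction $w\in\mathbb{R}^d$, there exists a point $p\in P'$ lying on the boundary of $MEB(P')$ (so $||p-o'||=Rad(P')$) with $\langle p-o',w\rangle\leq 0$; otherwise one could shift $o'$ slightly in the direction $w$ and still cover $P'$ with a ball of strictly smaller radius, contradicting the optimality of $o'$. I will apply this with the specific direction $w=(o-o')/||o-o'||$, which selects a boundary point $p$ lying on the \emph{opposite} side of $o'$ from $o$.

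Once this witness $p$ is in hand, the rest is a one-line calculation. Writing $p-o=(p-o')-(o-o')$ and expanding the squared norm gives
\begin{equation*}
||p-o||^2 \;=\; ||p-o'||^2 + ||o-o'||^2 - 2\langle p-o',\, o-o'\rangle.
\end{equation*}
The choice of $p$ makes the inner-product term non-positive, so $||p-o||^2 \geq Rad(P')^2 + ||o-o'||^2$. On the other hand, $p\in P'\subset P$ forces $||p-o||\leq Rad(P)$. Combining these,
\begin{equation*}
||o-o'||^2 \;\leq\; Rad(P)^2 - Rad(P')^2 \;\leq\; Rad(P)^2\bigl(1-(1-\epsilon)^2\bigr) \;\leq\; 2\epsilon\, Rad(P)^2,
\end{equation*}
where the second inequality invokes the stability bound $Rad(P')\geq(1-\epsilon)Rad(P)$. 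Taking square roots yields the claim.

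The only delicate step is the existence of the witness point $p$ with $\langle p-o',w\rangle\leq 0$. This is where I expect a careful reader to want full justification: one must cite (or briefly derive) the optimality characterization of the MEB center as a convex combination of contact points, and then argue that a supporting-hyperplane/separation argument forces a contact point on each side of any hyperplane through $o'$. Everything else is a short computation using the parallelogram-style expansion and the stability hypothesis.
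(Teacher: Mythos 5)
Your proof is correct, but it takes a genuinely different route from the paper's. The paper proves Lemma~\ref{lem-stable1} by a two-case geometric argument: when $MEB(P')\subset MEB(P)$ the bound follows by comparing the radii directly, and otherwise it picks a point $a$ on the intersection of the two bounding spheres and shows $\angle ao'o\geq \pi/2$ by contradiction (if the angle were acute, $P'$ would be covered by a ball smaller than $MEB(P')$), which yields $||o'-o||\leq\sqrt{Rad(P)^2-Rad(P')^2}$. You instead invoke the contact-point/half-space characterization of the MEB center: for the direction $w=(o-o')/||o-o'||$ there is a point $p\in P'$ on the sphere of $MEB(P')$ with $\langle p-o',w\rangle\leq 0$, and a single expansion of $||p-o||^2$ gives the same inequality $||o-o'||^2\leq Rad(P)^2-Rad(P')^2$ with no case split and no need for the two spheres to intersect. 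The witness point you flag as the delicate step is exactly Lemma 2.2 of~\cite{BHI}, which the paper itself states in Section~\ref{sec-bhi} and uses to prove Lemma~\ref{lem-stable2}, so it is fully justified by a citation already on hand; the only detail to add is the trivial case $o=o'$. What your approach buys is a shorter, unified argument that proves both Lemma~\ref{lem-stable1} and Lemma~\ref{lem-stable2} with the same mechanism; what the paper's approach buys is a self-contained elementary picture that does not rely on the optimality characterization of the MEB center. The final numeric step ($Rad(P)^2-(1-\epsilon)^2Rad(P)^2=(2\epsilon-\epsilon^2)Rad(P)^2\leq 2\epsilon Rad(P)^2$) matches the paper's.
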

\begin{proof}

We consider two cases: $MEB(P')$ is totally covered by $MEB(P)$ and otherwise. 
For the first case (see Figure~\ref{fig-lem1-1}), it is easy to see that 
\begin{eqnarray}
||o'-o||\leq Rad(P)-(1-\epsilon) Rad(P)=\epsilon Rad(P)<\sqrt{2\epsilon} Rad(P), \label{for-lem-stable1-1}
\end{eqnarray}
where the first inequality comes from the fact that $MEB(P')$ has radius at least $(1-\epsilon) Rad(P)$ (Definition~\ref{def-stable}), and the last inequality comes from the fact that $\epsilon<1$. Thus, we can focus on the second case below.


\begin{figure}
  \vspace{-33pt}
  \begin{center}
  \subfloat[]{\label{fig-lem1-1}\includegraphics[width=0.13\textwidth]{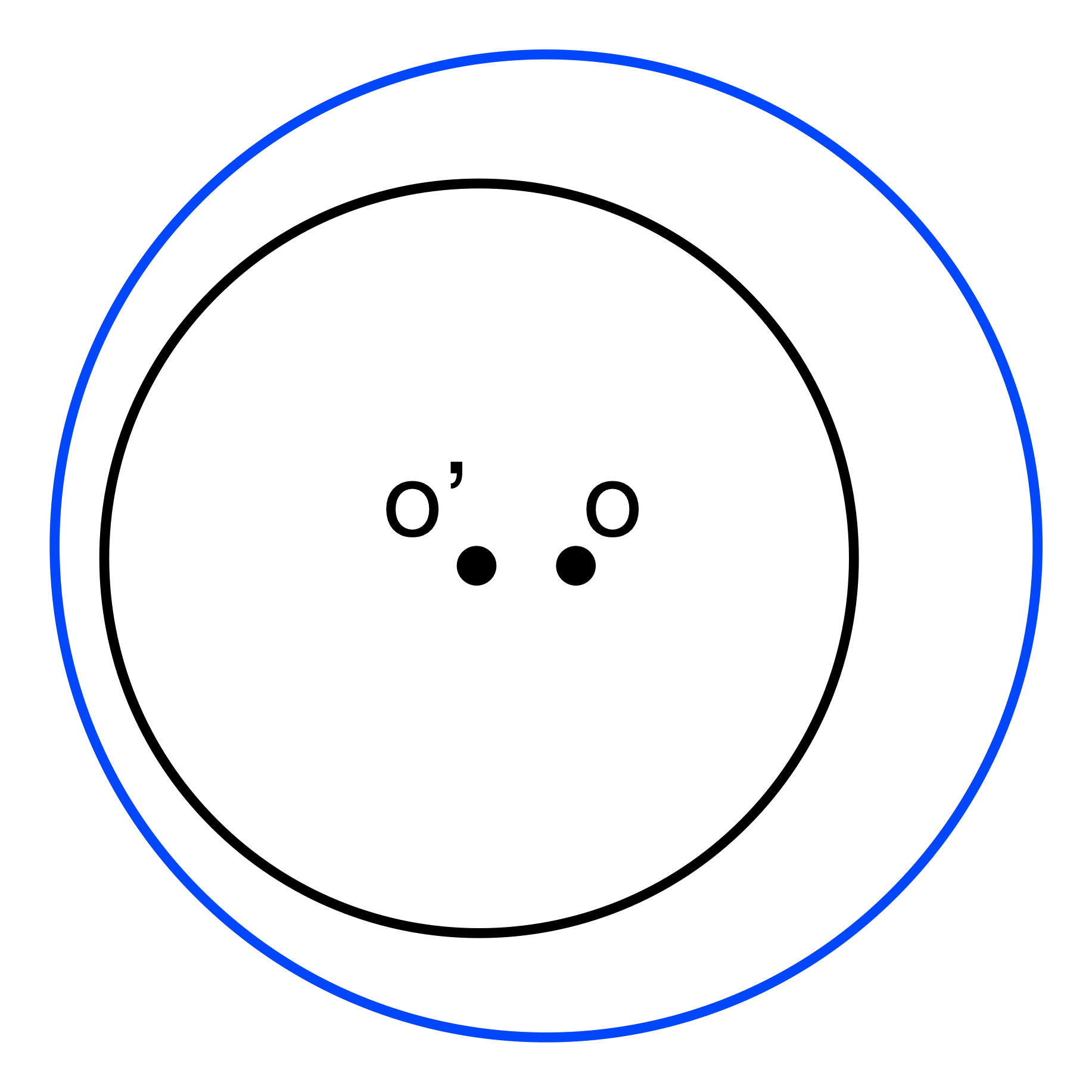}}
  \hspace{0.4in}
  \subfloat[]{\label{fig-lem1-2}\includegraphics[width=0.13\textwidth]{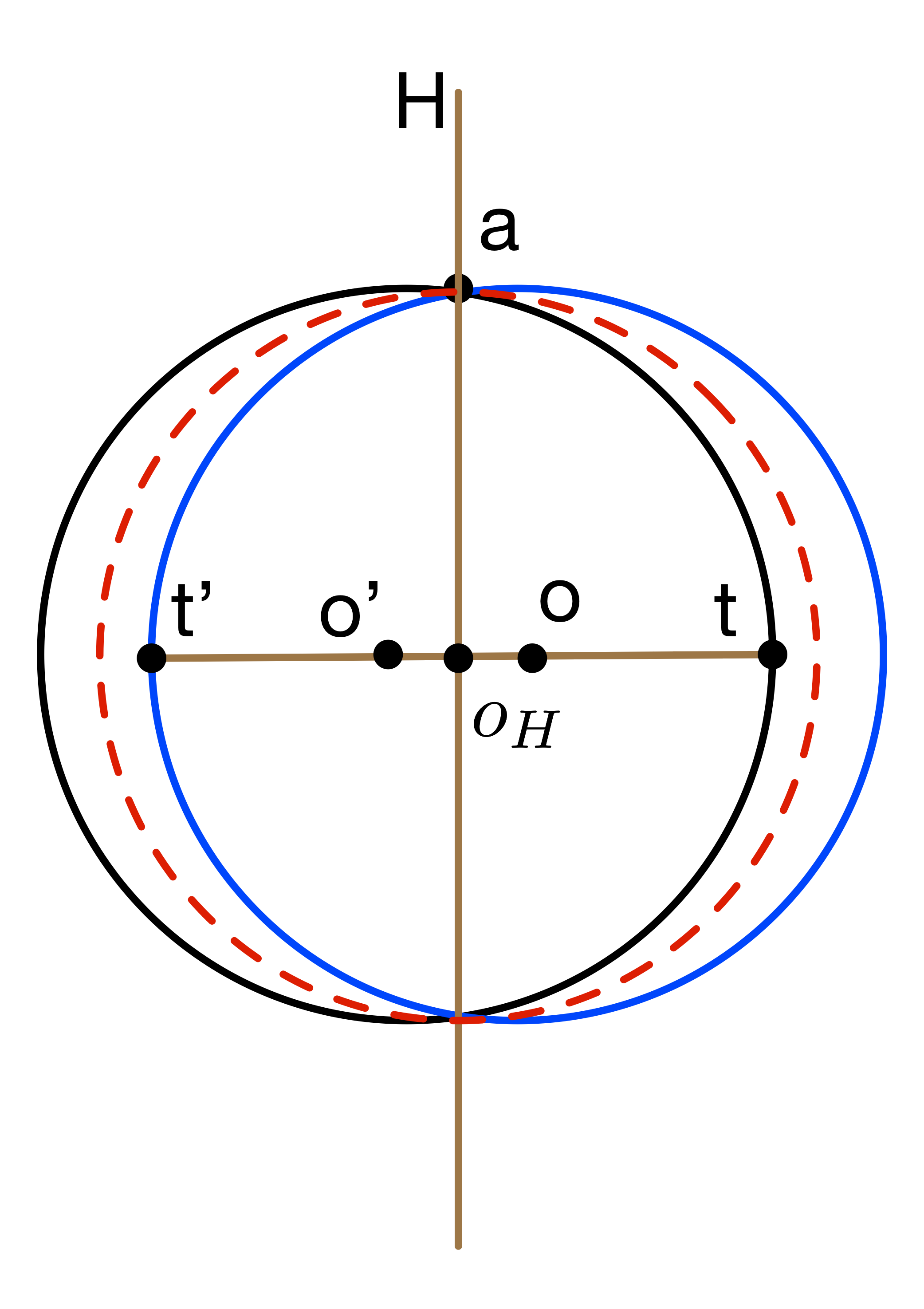}}
 \hspace{0.4in}
  \subfloat[]{\label{fig-lem1-3}\includegraphics[width=0.13\textwidth]{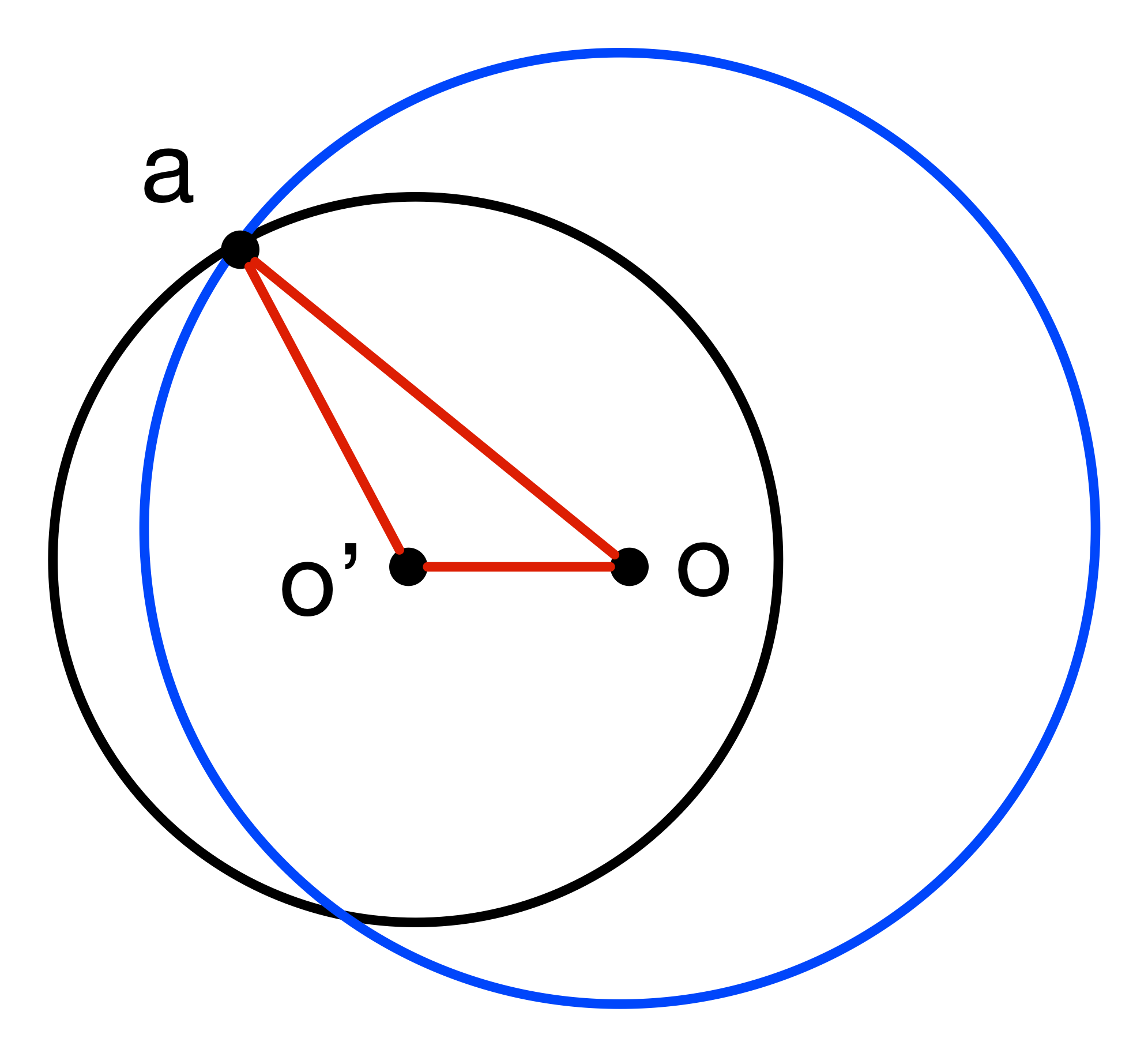}}
\hspace{0.4in}
  \subfloat[]{\label{fig-lem2}\includegraphics[width=0.13\textwidth]{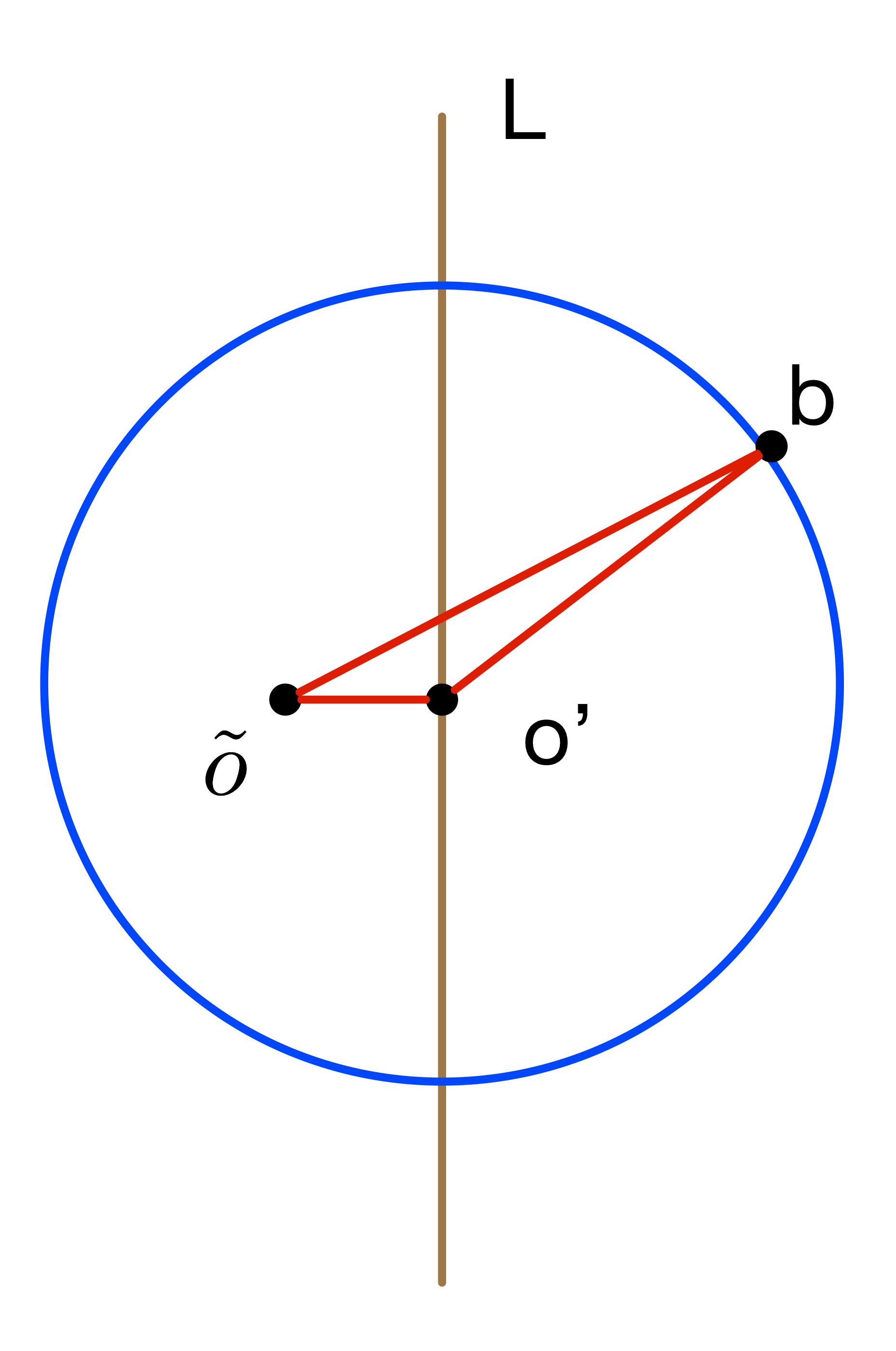}}
   \end{center}
 \vspace{-10pt}
  \caption{(a) The case $MEB(P')\subset MEB(P)$; (b) an illustration of Claim~\ref{cla-angle}; (c) the angle $\angle ao'o\geq \pi/2$; (d) an illustration of Lemma~\ref{lem-stable2}.}
     \vspace{-0.2in}
\end{figure}

Let $a$ be any point located on the intersection of the two spheres of $MEB(P')$ and $MEB(P)$. Consequently, we have the following claim. 

\begin{claim}
\label{cla-angle}
The angle $\angle ao'o\geq \pi/2$.
\end{claim}
\begin{proof}


Suppose  that $\angle ao'o< \pi/2$. Note that $\angle aoo'$ is always smaller than $\pi/2$ since $||o-a||=Rad(P)\geq Rad(P')=||o'-a||$. Therefore, $o$ and $o'$ are separated by the hyperplane $H$ that is  orthogonal 
to the segment $\overline{o'o}$ and passing through the point $a$. See Figure~\ref{fig-lem1-2}.

Now we  show that $P'$ can be covered by a ball smaller than $MEB(P')$. Let $o_H$ be the point $H\cap \overline{o'o}$, and $t$ ({\em resp.,} $t'$) be the point collinear with $o$ and $o'$ on the right side of the sphere of $MEB(P')$ ({\em resp.,} left side of the sphere of $MEB(P)$; see Figure ~\ref{fig-lem1-2}). Then, we have
\begin{eqnarray}
||t-o_H||+||o_H-o'||&=&||t-o'||=||a-o'||\nonumber\\
&<&||o'-o_H||+||o_H-a||\nonumber\\
\Longrightarrow ||t-o_H||&<&||o_H-a||.\label{cla-1}
\end{eqnarray}
Similarly, we have 
$||t'-o_H||<||o_H-a||$. 
Consequently, $MEB(P)\cap MEB(P')$ is covered by the ball $\mathbb{B}(o_H, ||o_H-a||)$. Further, because $P'$ is covered by  $MEB(P)\cap MEB(P')$ and $||o_H-a||<||o'-a||=Rad(P')$, $P'$ is covered by the ball $\mathbb{B}(o_H, ||o_H-a||)$ that is smaller than $MEB(P')$. This  contradicts to the fact that $MEB(P')$ is the minimum enclosing ball of $P'$. Thus, the claim $\angle ao'o\geq \pi/2$ is true.
\qed\end{proof}

Given Claim~\ref{cla-angle}, we know that $||o'-o||\leq \sqrt{\big(Rad(P)\big)^2-\big(Rad(P')\big)^2}$. See Figure~\ref{fig-lem1-3}. Moreover, Definition~\ref{def-stable} implies that $Rad(P')\geq (1-\epsilon) Rad(P)$. Therefore, we have 
\begin{eqnarray}
||o'-o||&\leq&\sqrt{\big(Rad(P)\big)^2-\big((1-\epsilon) Rad(P)\big)^2}\nonumber\\
&=&\sqrt{2\epsilon-\epsilon^2} Rad(P)\leq\sqrt{2\epsilon} Rad(P).
\end{eqnarray}
\qed\end{proof}

\begin{lemma}
\label{lem-stable2}
The distance $||\tilde{o}-o'||\leq 2\sqrt{\epsilon} Rad(P)$.
\end{lemma}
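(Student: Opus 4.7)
My plan is to prove the bound $||\tilde{o}-o'||\leq 2\sqrt{\epsilon}\,Rad(P)$ by comparing the two enclosing balls of $P'$: the given ball $\mathbb{B}(\tilde{o},r)$ and the true minimum enclosing ball $\mathbb{B}(o',Rad(P'))$. First I would note that by hypothesis $|P'|\geq (1-\beta_\epsilon)n$, so the stability of $P$ (Definition~\ref{def-stable}) yields $Rad(P')\geq(1-\epsilon)Rad(P)$, while the assumption of the theorem gives $r\leq (1+\epsilon)Rad(P)$. Thus I have symmetric upper and lower bounds on the two radii involved; only the displacement $||\tilde{o}-o'||$ of the centers remains to be controlled.

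The central ingredient I would invoke is the classical optimality property of MEB: the center $o'$ lies in the convex hull of the contact set $S=\{p\in P':||p-o'||=Rad(P')\}$. Given this, for any $\tilde{o}\neq o'$, let $u=(o'-\tilde{o})/||o'-\tilde{o}||$. I claim there must exist $p^*\in S$ with $\langle p^*-o',\,u\rangle\geq 0$; otherwise every point of $S$ would lie strictly in the open half-space $\{x:\langle x-o',u\rangle<0\}$, and $o'$ could not be a convex combination of points in $S$. An orthogonal decomposition (i.e., a Pythagoras-style expansion) then gives
\[ ||p^*-\tilde{o}||^2=||p^*-o'||^2+2\langle p^*-o',\,o'-\tilde{o}\rangle+||o'-\tilde{o}||^2\geq Rad(P')^2+||\tilde{o}-o'||^2. \]

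Since $p^*\in P'\subset\mathbb{B}(\tilde{o},r)$, the left side is at most $r^2$, so $||\tilde{o}-o'||^2\leq r^2-Rad(P')^2$. Substituting the bounds from the first step,
\[ ||\tilde{o}-o'||^2\leq (1+\epsilon)^2\,Rad(P)^2-(1-\epsilon)^2\,Rad(P)^2=4\epsilon\,Rad(P)^2, \]
which yields the desired inequality. The one step I expect to require some care is the convex-hull/optimality characterization of $o'$; a cleaner alternative, closer in flavor to the proof of Lemma~\ref{lem-stable1} and suggested by Figure~\ref{fig-lem2}, would be to project onto the line $\overline{\tilde{o}o'}$ and argue that some boundary point of $MEB(P')$ lies on the far side of $o'$ (otherwise $P'$ could be enclosed by a ball strictly smaller than $MEB(P')$ by sliding the center toward $\tilde{o}$, contradicting the minimality of $o'$). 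Either route produces the same geometric fact, and the remaining arithmetic is routine.
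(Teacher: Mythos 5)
Your proposal is correct and follows essentially the same route as the paper: both find a boundary point of $MEB(P')$ in the closed half-space through $o'$ facing away from $\tilde{o}$ (the paper cites Lemma 2.2 of \cite{BHI}, you derive the same fact from the convex-hull characterization of the MEB center), then apply the Pythagorean inequality $||\tilde{o}-o'||^2\leq r^2-\big(Rad(P')\big)^2$ and the bounds $r\leq(1+\epsilon)Rad(P)$, $Rad(P')\geq(1-\epsilon)Rad(P)$. The arithmetic and the conclusion match the paper exactly.
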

\begin{proof}



Let $L$ be the hyperplane orthogonal to
the segment $\overline{\tilde{o}o'}$ and passing through the center $o'$. Suppose $\tilde{o}$ is located on the left side of $L$. Then, there exists a point $b\in P'$ located on the right closed semi-sphere of $MEB(P')$ divided by $L$ (this result was proved in~\cite{goel2001reductions,BHI} and see Lemma 2.2 in~\cite{BHI}; for completeness, we also state the lemma in Section~\ref{sec-bhi}). See Figure \ref{fig-lem2}. That is, the angle $\angle bo'\tilde{o}\geq \pi/2$. As a consequence, we have 
\begin{eqnarray}
||\tilde{o}-o'||\leq\sqrt{||\tilde{o}-b||^2-||b-o'||^2}. \label{for-lem-stable2-1}
\end{eqnarray}
Moreover, since $||\tilde{o}-b||\leq r\leq (1+\epsilon)Rad(P)$ and $||b-o'||= Rad(P')\geq (1-\epsilon) Rad(P)$, (\ref{for-lem-stable2-1}) implies that $||\tilde{o}-o'||\leq \sqrt{(1+\epsilon)^2-(1-\epsilon)^2} Rad(P)=2\sqrt{\epsilon} Rad(P)$. 
\qed\end{proof}

By triangle inequality and Lemmas~\ref{lem-stable1} and \ref{lem-stable2}, we immediately have 
\begin{eqnarray}
||\tilde{o}-o||&\leq& ||\tilde{o}-o'||+||o'-o||\nonumber\\
&\leq&(2+\sqrt{2})\sqrt{\epsilon} Rad(P).
\end{eqnarray}
This completes the proof of Theorem~\ref{the-stable}.

\section{Sub-linear Time Algorithms for MEB}
\label{sec-sub}
Using Theorem~\ref{the-stable}, we present two different sub-linear time sampling algorithms for computing MEB. The first one is simpler, but has a sample size ({\em i.e.,} the number of sampled points) depending on the dimensionality $d$, while the second one has a sample size independent of both $n$ and $d$. Following most of existing articles on sub-linear time algorithms ({\em e.g.,} ~\cite{meyerson2004k,mishra2001sublinear,czumaj2004sublinear}), in each sampling step of our algorithms, we always take the sample \textbf{independently and uniformly at random}.

\subsection{The First Sampling Algorithm}
\label{sec-sample1}
Algorithm~\ref{alg-meb1} is based on the theory of VC dimension and $\epsilon$-net~\cite{vapnik2015uniform,haussler1987}. 
Roughly speaking, we compute an approximate MEB of a small random sample ({\em i.e.,} $\mathbb{B}(c, r)$), and expand the ball slightly; then we prove that this expanded ball is an approximate MEB of the whole data set.  As emphasized in Section~\ref{sec-overview}, our result is a single-criterion approximation. If simply applying the uniform sample idea without the stability assumption (as the ideas in~\cite{alon2003testing,DBLP:conf/focs/HuangJLW18,DBLP:journals/corr/abs-1901-08219}), it will result in a bi-criteria approximation where the ball has to cover less than $n$ points for achieving the desired bounded radius. 
Our key idea is to show that $\mathbb{B}(c, r)$ covers at least $(1-\beta_\epsilon)n$ points and therefore $c$ is close to the optimal center by Theorem~\ref{the-stable}. 


\renewcommand{\algorithmicrequire}{\textbf{Input:}}
\renewcommand{\algorithmicensure}{\textbf{Output:}}
\begin{algorithm}
   \caption{MEB Algorithm \Rmnum{1}}
   \label{alg-meb1}
\begin{algorithmic}[1]
\REQUIRE A $\beta_\epsilon$-stable instance $P$ of MEB problem in $\mathbb{R}^d$.
\STATE Sample a set $S$ of $\Theta(\frac{d}{\beta_\epsilon}\log \frac{d}{\beta_\epsilon})$ points from $P$ uniformly at random.
\STATE Apply any approximate MEB algorithm (such as the core-set based algorithm~\cite{badoiu2003smaller}) to compute a $(1+\epsilon)$-approximate MEB of $S$, and let the resulting ball be $\mathbb{B}(c, r)$.
\STATE Output the ball $\mathbb{B}\big(c,\frac{1+(2+\sqrt{2})\sqrt{\epsilon}}{1-\epsilon}r\big)$.
\end{algorithmic}
\end{algorithm}

\begin{theorem}
\label{the-meb1}
With constant probability, Algorithm~\ref{alg-meb1} returns a $\lambda$-approximate MEB of $P$, where
\begin{eqnarray}
\lambda=\frac{\big(1+(2+\sqrt{2})\sqrt{\epsilon}\big)(1+\epsilon)}{1-\epsilon} 
\end{eqnarray}
and $\lambda=1+O(\sqrt{\epsilon})$ if $\epsilon$ is small enough. The running time is $O\big(\frac{d^2}{\epsilon\beta_\epsilon}\log\frac{d}{\beta_\epsilon}+\frac{d}{\epsilon^4}\big)$.
\end{theorem}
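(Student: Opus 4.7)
The plan is to leverage Theorem~\ref{the-stable} as the engine of the proof: if we can show that the ball $\mathbb{B}(c,r)$ produced in Step~2 satisfies both (a) $r \leq (1+\epsilon)\,Rad(P)$ and (b) $|\mathbb{B}(c,r)\cap P| \geq (1-\beta_\epsilon)n$, then Theorem~\ref{the-stable} immediately gives $\|c-o\| \leq (2+\sqrt{2})\sqrt{\epsilon}\,Rad(P)$, and a short triangle-inequality calculation, together with a matching lower bound $r \geq (1-\epsilon)Rad(P)$, will deliver the claimed covering and approximation guarantees for the expanded output ball.

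Of these conditions, (a) is essentially immediate: since $S \subseteq P$ we have $Rad(S) \leq Rad(P)$, and then $r \leq (1+\epsilon)Rad(S) \leq (1+\epsilon)Rad(P)$. The real work is (b), and the natural tool is the $\epsilon$-net theorem for range spaces of VC dimension $O(d)$. Specifically, I would consider the range space on $P$ whose ranges are complements of Euclidean balls (equivalently, open exteriors of balls), which has VC dimension $O(d)$, and invoke the Haussler--Welzl theorem to assert that a uniform sample of size $\Theta\bigl(\tfrac{d}{\beta_\epsilon}\log\tfrac{d}{\beta_\epsilon}\bigr)$ is a $\beta_\epsilon$-net with constant probability. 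Because $\mathbb{B}(c,r)$ encloses every point of $S$, the exterior of $\mathbb{B}(c,r)$ contains no sample point and hence, by the net property, contains at most $\beta_\epsilon n$ points of $P$; this is exactly~(b). The same inclusion also yields the lower bound $r \geq (1-\epsilon)Rad(P)$: letting $P' = \mathbb{B}(c,r)\cap P$, we have $|P'|\geq (1-\beta_\epsilon)n$, so by Definition~\ref{def-stable}, $r \geq Rad(P') \geq (1-\epsilon)Rad(P)$.

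With (a), (b) and the lower bound in hand, the analysis finishes as in the paragraph following Theorem~\ref{the-stable}: by Theorem~\ref{the-stable} and the triangle inequality,
\begin{equation*}
P \subseteq \mathbb{B}\bigl(c,(1+(2+\sqrt{2})\sqrt{\epsilon})Rad(P)\bigr) \subseteq \mathbb{B}\bigl(c,\tfrac{1+(2+\sqrt{2})\sqrt{\epsilon}}{1-\epsilon}\,r\bigr),
\end{equation*}
where the last inclusion uses $Rad(P) \leq r/(1-\epsilon)$; this shows that the output ball covers $P$, and substituting $r \leq (1+\epsilon)Rad(P)$ into the output radius gives the stated $\lambda = \tfrac{(1+(2+\sqrt{2})\sqrt{\epsilon})(1+\epsilon)}{1-\epsilon} = 1+O(\sqrt{\epsilon})$. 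For the running time, reading $S$ and then running the core-set algorithm on $S$ costs, by Remark~\ref{rem-newbc}, $O\bigl(\tfrac{1}{\epsilon}(|S|+\tfrac{1}{\epsilon^3})d\bigr)$; plugging in $|S| = \Theta\bigl(\tfrac{d}{\beta_\epsilon}\log\tfrac{d}{\beta_\epsilon}\bigr)$ yields $O\bigl(\tfrac{d^2}{\epsilon\beta_\epsilon}\log\tfrac{d}{\beta_\epsilon} + \tfrac{d}{\epsilon^4}\bigr)$.

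The main obstacle I anticipate is precisely step (b). A naive Chernoff argument applied to a \emph{fixed} ball will not suffice, because $(c,r)$ is chosen \emph{adaptively} from $S$: the approximation subroutine may exploit any quirk of $S$ to place $c$ anywhere consistent with enclosing $S$, and the covering guarantee must hold uniformly over all such balls. This is exactly what the VC/$\epsilon$-net machinery supplies, and it is also the source of the unavoidable $d$-factor in the sample complexity, which the $d$-independent second algorithm of this section is designed to circumvent.
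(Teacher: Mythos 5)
Your proposal is correct and follows essentially the same route as the paper: the same $\beta_\epsilon$-net argument over the range space of ball complements (the paper's Lemma~\ref{lem-net}) to establish $|\mathbb{B}(c,r)\cap P|\geq(1-\beta_\epsilon)n$, the same invocation of Theorem~\ref{the-stable} combined with the lower bound $r\geq(1-\epsilon)Rad(P)$ from Definition~\ref{def-stable}, and the same running-time accounting via Remark~\ref{rem-newbc}. Your closing observation about why adaptivity of $(c,r)$ forces the VC machinery rather than a fixed-ball Chernoff bound is a correct reading of the paper's design.
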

\begin{remark}
If the dimensionality $d$ is too high, the random projection based technique {\em Johnson-Lindenstrauss (JL) transform}~\cite{dasgupta2003elementary} can be used to approximately preserve the radius of enclosing ball~\cite{DBLP:conf/compgeom/AgarwalHY07,DBLP:conf/cccg/KerberR15,DBLP:conf/compgeom/Sheehy14}. However, it is not very useful for reducing the time complexity of Algorithm~\ref{alg-meb1}. If we apply JL-transform on the sampled $\Theta(\frac{d}{\beta_\epsilon}\log \frac{d}{\beta_\epsilon})$ points in Step 1, the JL-transform step itself already takes $\Omega(\frac{d^2}{\beta_\epsilon}\log \frac{d}{\beta_\epsilon})$ time (our second algorithm in Section~\ref{sec-sample2} has the time complexity linear in $d$). 
\end{remark}

Before proving Theorem~\ref{the-meb1}, we prove the following lemma first.
\begin{lemma}
\label{lem-net}
Let $S$ be a set of $\Theta(\frac{d}{\beta_\epsilon}\log \frac{d}{\beta_\epsilon})$ points sampled randomly and independently from a given point set $P\subset\mathbb{R}^d$, and $B$ be any ball covering $S$. Then, with constant probability, $|B\cap P|\geq (1-\beta_\epsilon)|P|$.
\end{lemma}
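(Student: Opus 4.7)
\bigskip
\noindent\textbf{Proof proposal for Lemma~\ref{lem-net}.}
The plan is to view the statement as an $\epsilon$-net property and invoke the classical Haussler--Welzl $\epsilon$-net theorem for bounded VC-dimension range spaces, with $\epsilon$ set to $\beta_\epsilon$.

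First, I would set up the range space. Consider $(P,\mathcal{R})$, where the ground set is $P$ and $\mathcal{R}$ consists of all sets of the form $P\setminus B$ for a ball $B\subset\mathbb{R}^d$, i.e., the intersections of $P$ with complements of balls. It is standard that the family of balls in $\mathbb{R}^d$ has VC-dimension $d+1$, and taking complements preserves VC-dimension, so $\mathrm{VCdim}(\mathcal{R})\le d+1$. This is the only geometric ingredient I need.

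Next, I would restate the lemma in the contrapositive $\epsilon$-net language. What we want to show is: with constant probability, for every ball $B$ covering $S$ the set $P\setminus B$ has size less than $\beta_\epsilon|P|$. Equivalently, for every range $R=P\setminus B\in\mathcal{R}$ with $|R|\ge\beta_\epsilon|P|$, the sample $S$ must hit $R$. This is exactly the statement that $S$ is a $\beta_\epsilon$-net for $(P,\mathcal{R})$.

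Then I would apply the Haussler--Welzl $\epsilon$-net theorem: a sample of size $O\!\left(\tfrac{1}{\epsilon}\bigl(\mathrm{VCdim}(\mathcal{R})\log\tfrac{1}{\epsilon}+\log\tfrac{1}{\delta}\bigr)\right)$ drawn independently and uniformly at random from $P$ is an $\epsilon$-net with probability at least $1-\delta$. Plugging in $\epsilon=\beta_\epsilon$, $\mathrm{VCdim}(\mathcal{R})\le d+1$, and a constant failure probability $\delta$, the required sample size is $\Theta\!\left(\tfrac{d}{\beta_\epsilon}\log\tfrac{d}{\beta_\epsilon}\right)$, matching the hypothesis of the lemma. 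Combining this with the reformulation of the previous paragraph concludes the proof.

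I do not anticipate a serious obstacle: the argument is essentially just ``balls have bounded VC-dimension, plus the Haussler--Welzl $\epsilon$-net bound.'' The only point worth being careful about is using the complement range space (ball-complements rather than balls themselves), since the lemma's conclusion is about points \emph{missed} by $B$; this is a one-line observation but it is where all of the work is done.
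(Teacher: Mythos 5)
Your proposal is correct and follows essentially the same route as the paper: both define the range space of ball-complements over $P$, observe that the sample size makes $S$ a $\beta_\epsilon$-net with constant probability, and conclude that any ball covering $S$ misses fewer than $\beta_\epsilon|P|$ points. Your write-up merely makes the VC-dimension bound for balls (and its preservation under complementation) more explicit than the paper, which cites the $\epsilon$-net theorem directly.
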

\begin{proof}
Consider the range space $\Sigma=(P, \Phi)$ where each range $\phi\in \Phi$ is the complement of a ball in the space. In a range space, a subset $Y\subset P$ is a $\beta_\epsilon$-net if for any $\phi\in \Phi$, 
$\frac{|P\cap\phi|}{|P|}\geq \beta_\epsilon\Longrightarrow Y\cap\phi\neq\emptyset$. 
Since $|S|=\Theta(\frac{d}{\beta_\epsilon}\log \frac{d}{\beta_\epsilon})$, we know that $S$ is a $\beta_\epsilon$-net of $P$ with constant probability~\cite{vapnik2015uniform,haussler1987}. Thus, if $|B\cap P|< (1-\beta_\epsilon)|P|$, {\em i.e.,} $|P\setminus B|> \beta_\epsilon |P|$, we have $S\cap \big(P\setminus B\big)\neq\emptyset$. This contradicts to  the fact that $S$ is covered by $B$. Consequently, $|B\cap P|\geq (1-\beta_\epsilon)|P|$.
\qed\end{proof}

\begin{proof}(\textbf{of Theorem~\ref{the-meb1}})
Denote by $o$ the center of $MEB(P)$. Since $S\subset P$ and $\mathbb{B}(c, r)$ is a $(1+\epsilon)$-approximate MEB of $S$, we know that $r\leq (1+\epsilon)Rad(P)$. Moreover, Lemma~\ref{lem-net} implies that $|\mathbb{B}(c, r)\cap P|\geq (1-\beta_\epsilon)|P|$ with constant probability. Suppose it is true and let $P'=\mathbb{B}(c, r)\cap P$. Then, we have the distance
\begin{eqnarray}
||c-o||\leq (2+\sqrt{2})\sqrt{\epsilon}  Rad(P) \label{for-the-meb1-1}
\end{eqnarray}
via Theorem~\ref{the-stable}. For simplicity, we use $x$ to denote $(2+\sqrt{2})\sqrt{\epsilon}$. The inequality (\ref{for-the-meb1-1}) implies that the point set $P$ is covered by the ball $\mathbb{B}(c, (1+x)Rad(P))$. Note that we cannot directly return $\mathbb{B}(c, (1+x)Rad(P))$ as the final result, since we do not know the value of $Rad(P)$. Thus, we have to estimate the radius $(1+x)Rad(P)$.

Since $P'$ is covered by $\mathbb{B}(c, r)$ and $|P'|\geq (1-\beta_\epsilon)|P|$, $r$ should be at least $(1-\epsilon)Rad(P)$ due to Definition~\ref{def-stable}. Hence, we have
\begin{eqnarray}
\frac{1+x}{1-\epsilon}r\geq (1+x)Rad(P).
\end{eqnarray}
That is, $P$ is covered by the ball $\mathbb{B}(c, \frac{1+x}{1-\epsilon}r)$. Moreover, the radius 
\begin{eqnarray}
 \frac{1+x}{1-\epsilon}r\leq  \frac{1+x}{1-\epsilon}(1+\epsilon)Rad(P).  
  \end{eqnarray}
This means that  ball $\mathbb{B}(c, \frac{1+x}{1-\epsilon}r)$ is a $\lambda$-approximate MEB of $P$, where
\begin{eqnarray}
\lambda&=&(1+\epsilon) \frac{1+x}{1-\epsilon}=\frac{\big(1+(2+\sqrt{2})\sqrt{\epsilon}\big)(1+\epsilon)}{1-\epsilon} 
\end{eqnarray}
and $\lambda=1+O(\sqrt{\epsilon})$ if $\epsilon$ is small enough. 
If we use the core-set based algorithm~\cite{badoiu2003smaller} to compute $\mathbb{B}(c, r)$ (see Remark~\ref{rem-newbc}), the running time of Algorithm~\ref{alg-meb1} is $O\big(\frac{1}{\epsilon}(|S|d+\frac{1}{\epsilon^3}d)\big)=O\big(\frac{d^2}{\epsilon\beta_\epsilon}\log\frac{d}{\beta_\epsilon}+\frac{d}{\epsilon^4}\big)$.  
\qed
\end{proof}
%


\subsection{The Second Sampling Algorithm}
\label{sec-sample2}

To better understand the second sampling algorithm, we briefly overview our idea below.

\begin{wrapfigure}{r}{0.33\textwidth}
\begin{center}
    \includegraphics[width=0.22\textwidth]{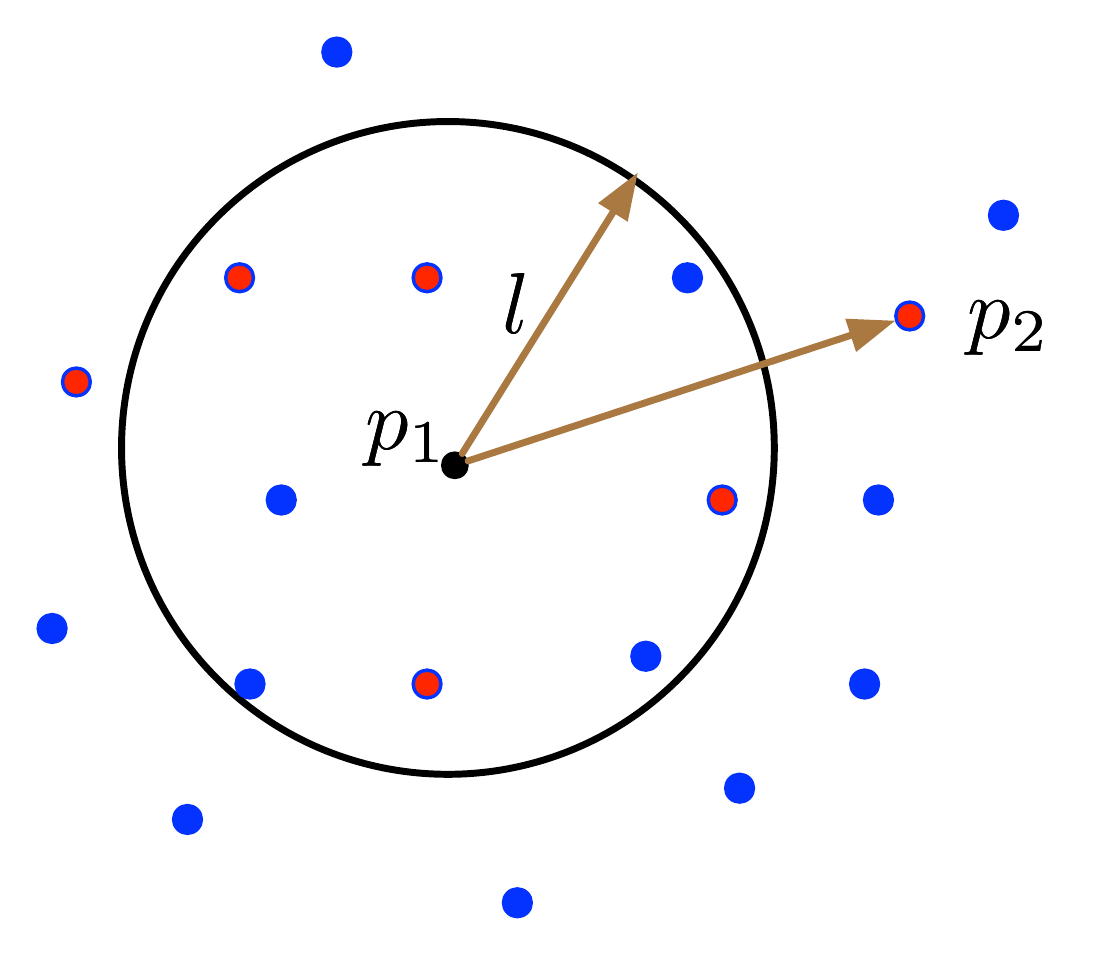}  
    \end{center}
  \vspace{-18pt}
  \caption{An illustration of Lemma~\ref{lem-upper}; the red points are the set $Q$ of sampled points.}     
   \label{fig-lem4}
     \vspace{-18pt}
\end{wrapfigure}

\vspace{0.05in}
\noindent\textbf{High level idea:}  Recall our remark below Theorem~\ref{the-newbc} in Section~\ref{sec-newanalysis}. If we know the value of $(1+\epsilon)Rad(P)$, we can perform almost the same core-set construction procedure described in Theorem~\ref{the-newbc} to achieve an approximate center of $MEB(P)$, where the only difference is that we add a point with distance at least $(1+\epsilon)Rad(P)$ to $o_i$ in each iteration. In this way, we avoid selecting the farthest point to $o_i$, since this operation will inevitably have a linear time complexity. To implement our strategy in sub-linear time, we need to determine the value of  $(1+\epsilon)Rad(P)$ first. We use Lemma~\ref{lem-upper} to estimate the range of $Rad(P)$, and then perform a binary search on the range to determine the value of  $(1+\epsilon)Rad(P)$ approximately. Based on the stability property, we observe that the core-set construction procedure can serve as an ``oracle'' to help us to guess the value of  $(1+\epsilon)Rad(P)$ (see Algorithm~\ref{alg-meb2}). Let $h>0$ be a candidate.  We add a point with distance at least $h$ to $o_i$ in each iteration. We prove that the procedure cannot continue more than $z$ iterations if $h\geq (1+\epsilon)Rad(P)$, and will continue more than $z$ iterations with constant probability if $h<(1-\epsilon)Rad(P)$, where $z=\frac{2}{(1-s)\epsilon}$ is the size of core-set described in Theorem~\ref{the-newbc}. Also, during the procedure of core-set construction, we add the points to the core-set via random sampling, rather than a deterministic way. As a consequence, we obtain our second sub-linear time algorithm where the 
final result is presented in Theorem~\ref{the-sample2}.
\vspace{0.05in}

\begin{lemma}
\label{lem-upper}
Let $P$ be a $\beta_\epsilon$-stable instance of MEB problem. Given a parameter $\eta\in (0,1)$, one selects an arbitrary point $p_1\in P$ and takes a  sample $Q\subset P$ with $|Q|=\frac{1}{\beta_\epsilon}\log\frac{1}{\eta}$ uniformly at random. Let $p_2$ be the point farthest to $p_1$ from $Q$. Then, with probability $1-\eta$, 
\begin{eqnarray}
Rad(P)\in [\frac{1}{2}||p_1-p_2||, \frac{1}{1-\epsilon}||p_1-p_2||].
\end{eqnarray}
\end{lemma}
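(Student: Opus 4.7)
The plan is to prove the two endpoints of the interval separately, with the lower bound being trivial and the upper bound relying on the stability assumption together with a standard sampling argument.

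First I would dispose of the inequality $\tfrac{1}{2}\|p_1 - p_2\| \leq Rad(P)$. Since $p_1, p_2 \in P$, both points lie inside $MEB(P)$, so $\|p_1-p_2\| \leq 2\, Rad(P)$ by the triangle inequality applied to the center of $MEB(P)$. No randomness is involved here; this direction holds deterministically.

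For the nontrivial direction $Rad(P) \leq \tfrac{1}{1-\epsilon}\|p_1-p_2\|$, the strategy is to exhibit a large subset of $P$ enclosed by a small ball and then invoke Definition~\ref{def-stable}. Concretely, let $\tau$ denote the smallest value such that at most $\beta_\epsilon n$ points of $P$ lie at distance strictly greater than $\tau$ from $p_1$; equivalently, $\tau$ is the $\lceil \beta_\epsilon n\rceil$-th largest distance from $p_1$. The key geometric observation is that if $\|p_1-p_2\| \geq \tau$, then the ball $\mathbb{B}(p_1, \|p_1-p_2\|)$ contains at least $(1-\beta_\epsilon) n$ points of $P$, and so letting $P' = \mathbb{B}(p_1, \|p_1-p_2\|) \cap P$, stability gives $Rad(P') \geq (1-\epsilon)Rad(P)$, while trivially $Rad(P') \leq \|p_1-p_2\|$ because $P'$ already fits in a ball of that radius. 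Combining the two inequalities yields the claim.

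It remains to show that $\|p_1-p_2\| \geq \tau$ with probability at least $1-\eta$. By construction, at least $\beta_\epsilon n$ points of $P$ have distance $\geq \tau$ from $p_1$, so each independent uniform sample in $Q$ falls into this ``far'' set with probability at least $\beta_\epsilon$. The event $\|p_1-p_2\| < \tau$ occurs exactly when every one of the $|Q|$ samples misses the far set, which has probability at most
\begin{equation}
(1-\beta_\epsilon)^{|Q|} \;\leq\; e^{-\beta_\epsilon |Q|} \;=\; e^{-\log(1/\eta)} \;=\; \eta,
\end{equation}
using $|Q|=\tfrac{1}{\beta_\epsilon}\log\tfrac{1}{\eta}$. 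Taking the complementary event and combining with the deterministic lower bound finishes the proof.

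The only mildly delicate point is being careful with the definition of $\tau$ so that the counts ``at least $\beta_\epsilon n$ points at distance $\geq \tau$'' and ``at least $(1-\beta_\epsilon)n$ points at distance $\leq \|p_1-p_2\|$'' are consistent in the presence of ties; this is a minor bookkeeping issue and does not affect the argument. I expect the main conceptual step, rather than obstacle, to be the realization that stability lets one convert a \emph{ball enclosing a $(1-\beta_\epsilon)$-fraction} of $P$ into a lower bound on the MEB of that subset, which in turn lower-bounds $(1-\epsilon)Rad(P)$.
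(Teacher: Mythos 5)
Your proposal is correct and follows essentially the same route as the paper's proof: the lower bound is the trivial diameter bound, and the upper bound combines the threshold distance $\tau$ (the paper's radius $l$ of the ball covering exactly $(1-\beta_\epsilon)n$ points), the stability definition applied to the covered $(1-\beta_\epsilon)n$-point subset, and the standard $(1-\beta_\epsilon)^{|Q|}\leq e^{-\beta_\epsilon|Q|}=\eta$ sampling bound (which the paper invokes as a cited folklore lemma). The only cosmetic difference is that you apply stability to the ball $\mathbb{B}(p_1,\|p_1-p_2\|)$ while the paper applies it to $\mathbb{B}(p_1,l)$ and then notes $\|p_1-p_2\|\geq l$; these are logically interchangeable.
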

\begin{proof}

First, the lower bound of $Rad(P)$ is obvious since $||p_1-p_2||$ is always no larger than $2Rad(P)$. Then, we consider the upper bound. Let $\mathbb{B}(p_1, l)$ be the ball covering exactly $(1-\beta_\epsilon)n$ points of $P$, and thus $l\geq (1-\epsilon)Rad(P)$ according to Definition~\ref{def-stable}. To complete our proof, we also need the following folklore lemma presented in~\cite{DX14}.

%

\begin{lemma} {\cite{DX14}}
\label{cla-sampling}
Let $N$ be a set of elements, and $N'$ be a subset of $N$ with
size $\left|N'\right|=\alpha\left|N\right|$ for some $\alpha\in(0,1)$. Given $\eta\in (0,1)$, if one randomly samples $\frac{\ln1/\eta}{\ln1/(1-\alpha)}\leq\frac{1}{\alpha}\ln\frac{1}{\eta}$ elements from $N$, then with probability at least $1-\eta$, the sample contains at least one element of $N'$.
\end{lemma}

In Lemma~\ref{cla-sampling}, let $N$  and $N'$ be the point set $P$ and the subset $P\setminus \mathbb{B}(p_1, l)$, respectively. We know that $Q$ contains at least one point from $N'$ according to Lemma~\ref{cla-sampling} (by setting $\alpha=\beta_\epsilon$). Namely, $Q$ contains at least one point outside $\mathbb{B}(p_1, l)$. See Figure~\ref{fig-lem4}. As a consequence, we have $||p_1-p_2||\geq l\geq (1-\epsilon)Rad(P)$, {\em i.e.}, $Rad(P)\leq \frac{1}{1-\epsilon}||p_1-p_2||$.
\qed\end{proof}

Note that Lemma~\ref{lem-upper} immediately implies the following result.

\begin{theorem}
\label{the-sample2-easy}
In Lemma~\ref{lem-upper}, the ball $\mathbb{B}(p_1, \frac{2}{1-\epsilon}||p_1-p_2||)$ is a $\frac{4}{1-\epsilon}$-approximate MEB of $P$, with probability $1-\eta$.
\end{theorem}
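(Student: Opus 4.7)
The plan is to derive Theorem~\ref{the-sample2-easy} as a direct corollary of Lemma~\ref{lem-upper}, with essentially no additional machinery beyond the triangle inequality. Condition on the good event of Lemma~\ref{lem-upper}, which occurs with probability $1-\eta$; on this event we simultaneously have the lower bound $\|p_1-p_2\|\leq 2\,Rad(P)$ and the upper bound $\|p_1-p_2\|\geq (1-\epsilon)\,Rad(P)$, i.e.\ $Rad(P)\leq \frac{1}{1-\epsilon}\|p_1-p_2\|$. Everything else is squeezing these two inequalities.

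First I would verify the \emph{covering} property. Since $p_1\in P\subset MEB(P)$, for every $p\in P$ the triangle inequality through the MEB center gives $\|p_1-p\|\leq 2\,Rad(P)$. Substituting the upper bound from Lemma~\ref{lem-upper} yields
\begin{equation}
\|p_1-p\|\leq 2\,Rad(P)\leq \frac{2}{1-\epsilon}\|p_1-p_2\|,
\end{equation}
so $P\subset \mathbb{B}\!\bigl(p_1,\tfrac{2}{1-\epsilon}\|p_1-p_2\|\bigr)$.

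Next I would bound the \emph{radius} of the output ball against $Rad(P)$. Using the lower bound $\|p_1-p_2\|\leq 2\,Rad(P)$ from Lemma~\ref{lem-upper},
\begin{equation}
\frac{2}{1-\epsilon}\|p_1-p_2\|\leq \frac{2}{1-\epsilon}\cdot 2\,Rad(P)=\frac{4}{1-\epsilon}Rad(P).
\end{equation}
Combining the two inequalities gives that $\mathbb{B}\!\bigl(p_1,\tfrac{2}{1-\epsilon}\|p_1-p_2\|\bigr)$ is a valid enclosing ball with radius at most $\tfrac{4}{1-\epsilon}Rad(P)$, which is the claimed $\tfrac{4}{1-\epsilon}$-approximation. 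The success probability $1-\eta$ is inherited from Lemma~\ref{lem-upper}.

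There is no real obstacle here: the only subtle point is making sure the two bounds of Lemma~\ref{lem-upper} hold \emph{simultaneously}, which they do since they come from the same random event on $Q$. I would explicitly flag this in the write-up so the reader sees that no union bound is needed, and so the stability hypothesis (used inside Lemma~\ref{lem-upper} via Definition~\ref{def-stable}) is correctly invoked.
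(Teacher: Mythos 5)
Your proposal is correct and is essentially identical to the paper's own proof: both use the upper bound $Rad(P)\leq\frac{1}{1-\epsilon}\|p_1-p_2\|$ together with $\|p_1-p\|\leq 2Rad(P)$ to establish covering, and the lower bound $\|p_1-p_2\|\leq 2Rad(P)$ to bound the output radius by $\frac{4}{1-\epsilon}Rad(P)$. Your explicit remark that both bounds hold on the same event (so no union bound is needed) is a harmless clarification the paper leaves implicit.
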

\begin{proof}
From the upper bound in Lemma~\ref{lem-upper}, we know that $\frac{2}{1-\epsilon}||p_1-p_2||\geq 2Rad(P)$. Since $||p_1-p||\leq 2 Rad(P)$ for any $p\in P$,  the ball $\mathbb{B}(p_1, \frac{2}{1-\epsilon}||p_1-p_2||)$ covers the whole point set $P$. From the lower bound in Lemma~\ref{lem-upper}, we know that $\frac{2}{1-\epsilon}||p_1-p_2||\leq \frac{4}{1-\epsilon}Rad(P)$. Therefore, it is a $\frac{4}{1-\epsilon}$-approximate MEB of $P$.
\qed\end{proof}
\vspace{-0.05in}

Since $|Q|=\frac{1}{\beta_\epsilon}\log\frac{1}{\eta}$ in Lemma~\ref{lem-upper}, Theorem~\ref{the-sample2-easy} indicates that we can easily obtain a $\frac{4}{1-\epsilon}$-approximate MEB of $P$ in $O(\frac{1}{\beta_\epsilon}(\log\frac{1}{\eta}) d)$ time. We further show our second sampling algorithm (Algorithm~\ref{alg-meb3}) that achieves a lower approximation ratio. Algorithm~\ref{alg-meb2} serves as a subroutine in Algorithm~\ref{alg-meb3}. In Algorithm~\ref{alg-meb2}, we simply set $z=\frac{3}{\epsilon}$ with $s=1/3$ as described in Theorem~\ref{the-newbc}; we compute $o_i$ having distance less than $s\frac{\epsilon}{1+\epsilon}Rad(T)$ to the center of $MEB(T)$ in Step 2(1).

\begin{algorithm}
   \caption{MEB Algorithm \Rmnum{2}}
   \label{alg-meb3}
\begin{algorithmic}[1]
\REQUIRE A $\beta_\epsilon$-stable instance $P$ of MEB problem in $\mathbb{R}^d$; a parameter $\eta_0\in (0,1)$ and a positive integer $z=\frac{3}{\epsilon}$; the interval $[a, b]$ for $Rad(P)$ obtained by Lemma~\ref{lem-upper}.
\STATE Among the set $\{(1-\epsilon)a, (1+\epsilon)(1-\epsilon)a, \cdots, (1+\epsilon)^w (1-\epsilon)a=(1+\epsilon)b\}$ where $w=\lceil \log_{1+\epsilon}\frac{2}{(1-\epsilon)^2}\rceil+1=O(\frac{1}{\epsilon}\log \frac{1}{1-\epsilon})$, perform binary search for the value $h$ by using Algorithm~\ref{alg-meb2} with $\eta=\frac{\eta_0}{2\log w}$. 
\STATE Suppose that Algorithm~\ref{alg-meb2} returns ``no'' when $h=(1+\epsilon)^{i_0} (1-\epsilon)a$ and returns ``yes'' when $h=(1+\epsilon)^{i_0+1} (1-\epsilon)a$.
\STATE Run Algorithm~\ref{alg-meb2} again with $h=(1+\epsilon)^{i_0+2}a$ and $\eta=\eta_0/2$; let $\tilde{o}$ be the resulting ball center of $T$ when the loop stops.
\STATE Return the ball $\mathbb{B}(\tilde{o}, r)$, where $r=\frac{1+(4+4\sqrt{2})\sqrt{\frac{\epsilon}{1-\epsilon}}}{1+\epsilon}h$.
\end{algorithmic}
\end{algorithm}
\vspace{-0.2in}

\begin{theorem}
\label{the-sample2}
With probability $1-\eta_0$, Algorithm~\ref{alg-meb3} returns a $\lambda$-approximate MEB of $P$, where 
\begin{eqnarray}
\lambda=\frac{(1+x_1)(1+x_2)}{1+\epsilon} \text{\hspace{0.1in} with \hspace{0.1in}} x_1=O\big(\frac{\epsilon}{1-\epsilon}\big), x_2=O\big(\sqrt{\frac{\epsilon}{1-\epsilon}}\big),
\end{eqnarray}
and $\lambda=1+O(\sqrt{\epsilon})$ if $\epsilon$ is small enough. The running time is $\tilde{O}\big((\frac{1}{\epsilon\beta_\epsilon}+\frac{1}{\epsilon^4})d\big)$, where  $\tilde{O}(f)=O(f\cdot polylog(\frac{1}{\epsilon}, \frac{1}{1-\epsilon}, \frac{1}{\eta_0}))$. 
\end{theorem}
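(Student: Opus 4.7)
The plan is to view Algorithm~\ref{alg-meb2} as an oracle that, on input threshold $h$, tests whether $h\gtrsim Rad(P)$: by Remark~\ref{rem-newbc}(\rmnum{2}) the core-set construction deterministically halts within $z=3/\epsilon$ iterations when $h\geq(1+\epsilon)Rad(P)$ (so the oracle returns ``yes''), while by Definition~\ref{def-stable} together with Lemma~\ref{cla-sampling} the random samples drawn inside Algorithm~\ref{alg-meb2} produce a point at distance $\geq h$ from $o_i$ in every iteration with probability $\geq 1-\eta$ when $h<(1-\epsilon)Rad(P)$ (so the oracle returns ``no''). Once this two-sided guarantee is pinned down, Algorithm~\ref{alg-meb3} reduces to binary search on a geometric grid followed by one final oracle call plus a net argument to certify coverage.

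I would first apply Lemma~\ref{lem-upper} with failure probability $\eta_0/4$ to obtain $Rad(P)\in[a,b]$ with $b/a=2/(1-\epsilon)$, so the geometric grid $\{(1+\epsilon)^i(1-\epsilon)a\}_{i=0}^{w}$ of length $w=O(\tfrac{1}{\epsilon}\log\tfrac{1}{1-\epsilon})$ spans both sides of $Rad(P)$. Binary search then takes $O(\log w)$ oracle calls (each with failure $\leq\eta=\eta_0/(2\log w)$) and returns an index $i_0$ such that $(1+\epsilon)^{i_0}(1-\epsilon)a<(1+\epsilon)Rad(P)$ and $(1+\epsilon)^{i_0+1}(1-\epsilon)a\geq(1-\epsilon)Rad(P)$. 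Consequently, the threshold $h=(1+\epsilon)^{i_0+2}a$ chosen in Step~3 lies in the window
\[
(1+\epsilon)Rad(P)\;\leq\;h\;\leq\;\tfrac{(1+\epsilon)^3}{1-\epsilon}Rad(P),
\]
and this yields $1+x_1=(1+\epsilon)^3/(1-\epsilon)=1+O(\epsilon/(1-\epsilon))$.

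For the final oracle call at this $h$: since $h\geq(1+\epsilon)Rad(P)$ the loop stops within $z$ iterations and returns a centre $\tilde{o}$ at which no sampled point lies at distance $\geq h$. By the $\beta_\epsilon$-net argument of Lemma~\ref{lem-net}, $|\mathbb{B}(\tilde{o},h)\cap P|\geq(1-\beta_\epsilon)n$ with probability $\geq 1-\eta_0/4$. I then repeat the proof of Theorem~\ref{the-stable} verbatim except that Lemma~\ref{lem-stable2} is re-run with $r=h$ in place of $r\leq(1+\epsilon)Rad(P)$: this gives $||\tilde{o}-o'||\leq\sqrt{h^2-(1-\epsilon)^2Rad(P)^2}$, and expanding $(1+\epsilon)^6-(1-\epsilon)^4$ using the upper bound on $h$ bounds this by a quantity of order $\sqrt{\epsilon/(1-\epsilon)}\,Rad(P)$; combining with Lemma~\ref{lem-stable1} via the triangle inequality yields $||\tilde{o}-o||\leq(4+4\sqrt{2})\sqrt{\epsilon/(1-\epsilon)}\,Rad(P)=x_2\,Rad(P)$. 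Since $P\subset\mathbb{B}(o,Rad(P))\subset\mathbb{B}(\tilde{o},(1+x_2)Rad(P))$ and $Rad(P)\leq h/(1+\epsilon)$, the returned ball $\mathbb{B}(\tilde{o},r)$ with $r=\frac{1+x_2}{1+\epsilon}h$ covers $P$ and has radius at most $\frac{(1+x_1)(1+x_2)}{1+\epsilon}Rad(P)=\lambda\,Rad(P)$; Taylor expansion in $\epsilon$ then confirms $\lambda=1+O(\sqrt{\epsilon})$ in the small-$\epsilon$ regime.

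Finally, a union bound over Lemma~\ref{lem-upper}, the $O(\log w)$ binary-search oracle calls, and the final oracle call gives overall failure probability $\leq\eta_0$ under the stated setting $\eta=\eta_0/(2\log w)$. For the running time, each oracle call performs $O(z)=O(1/\epsilon)$ iterations, each costing $O\bigl((\tfrac{1}{\beta_\epsilon}\log\tfrac{1}{\eta}+\tfrac{z}{\epsilon^2})d\bigr)=O\bigl((\tfrac{1}{\beta_\epsilon}\log\tfrac{1}{\eta}+\tfrac{1}{\epsilon^3})d\bigr)$ by the complexity accounting behind Theorem~\ref{the-newbc} together with the sampled intra-iteration search; summing over $O(\log w)$ queries and absorbing polylogs into $\tilde{O}(\cdot)$ yields the claimed $\tilde{O}\bigl((\tfrac{1}{\epsilon\beta_\epsilon}+\tfrac{1}{\epsilon^4})d\bigr)$. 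The main obstacle I anticipate is the strengthened Lemma~\ref{lem-stable2}: one has to verify, by careful expansion of $(1+\epsilon)^6-(1-\epsilon)^4$ and without overusing triangle-inequality slack, that the enlarged tolerance $h\leq\frac{(1+\epsilon)^3}{1-\epsilon}Rad(P)$ still leads to the coefficient $4+4\sqrt{2}$ on $\sqrt{\epsilon/(1-\epsilon)}$ rather than something larger; the binary-search correctness, the net argument for coverage, and the union-bound bookkeeping are all standard once the oracle semantics are in place.
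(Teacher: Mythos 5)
Your overall architecture matches the paper's proof almost exactly: the two-sided oracle semantics of Algorithm~\ref{alg-meb2} (deterministic ``yes'' when $h\geq(1+\epsilon)Rad(P)$ via Remark~\ref{rem-newbc}, probabilistic ``no'' when $h<(1-\epsilon)Rad(P)$ via Lemma~\ref{cla-sampling}), the binary search yielding $(1+\epsilon)Rad(P)\leq h\leq\frac{(1+\epsilon)^3}{1-\epsilon}Rad(P)$ and hence $x_1=\frac{8\epsilon}{1-\epsilon}$, the application of the stability implication to get $\|\tilde{o}-o\|\leq(4+4\sqrt{2})\sqrt{\epsilon/(1-\epsilon)}\,Rad(P)$, and the final rescaling by $h/(1+\epsilon)$ are all the same as in the paper (the paper simply invokes Theorem~\ref{the-stable} with $\epsilon$ replaced by $\frac{8\epsilon}{1-\epsilon}$ rather than re-deriving Lemma~\ref{lem-stable2} with $r=h$; both routes work and yield the same constant).

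However, there is one genuine gap: you certify that $|\mathbb{B}(\tilde{o},h)\cap P|\geq(1-\beta_\epsilon)n$ ``by the $\beta_\epsilon$-net argument of Lemma~\ref{lem-net}.'' That lemma's hypotheses are not met here. Lemma~\ref{lem-net} requires a single sample of size $\Theta(\frac{d}{\beta_\epsilon}\log\frac{d}{\beta_\epsilon})$ and certifies coverage only for balls containing the \emph{entire} sample; Algorithm~\ref{alg-meb2} draws fresh samples of size $\frac{1}{\beta_\epsilon}\log\frac{z}{\eta}$ per iteration --- deliberately independent of $d$, which is the whole point of the second algorithm --- and the ball $\mathbb{B}(\tilde{o},h)$ is only known to contain the farthest point of the \emph{last} iteration's sample. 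If you actually imported the VC-net sample size, the per-iteration cost would become $\Omega(\frac{d^2}{\beta_\epsilon}\log\frac{d}{\beta_\epsilon})$, contradicting the claimed running time $\tilde{O}\big((\frac{1}{\epsilon\beta_\epsilon}+\frac{1}{\epsilon^4})d\big)$ (the $\tilde{O}$ absorbs polylogs in $\frac{1}{\epsilon},\frac{1}{1-\epsilon},\frac{1}{\eta_0}$ only, not in $d$). The correct argument, which is the one the paper uses, is a contrapositive of Lemma~\ref{cla-sampling}: if $|P\setminus\mathbb{B}(o_i,h)|\geq\beta_\epsilon n$ at the iteration where the loop stops, then with probability at least $1-\eta/z$ the fresh sample $Q$ of that iteration contains a point at distance at least $h$ from $o_i$, so the algorithm would not have output ``yes''; a union bound over the at most $z$ iterations makes this hold simultaneously for whichever iteration terminates the loop. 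Replacing your net argument by this sampling argument closes the gap and restores the $d$-independent sample size; everything else in your write-up then goes through as in the paper.
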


Before proving Theorem~\ref{the-sample2}, we provide Lemma~\ref{lem-sample2} first.

\begin{algorithm}
   \caption{Oracle for testing $h$}
   \label{alg-meb2}
\begin{algorithmic}[1]
\REQUIRE A $\beta_\epsilon$-stable instance $P$ of MEB problem in $\mathbb{R}^d$; a parameter  $\eta\in (0,1)$, $h>0$, and a positive integer $z=\frac{3}{\epsilon}$.
\STATE Initially, arbitrarily select a point $p\in P$ and let $T=\{p\}$.
\STATE $i=1$; repeat the following steps:
\begin{enumerate}[(1)]
\item Compute an approximate MEB of $T$ and let the ball center be $o_i$.
\item Sample a set $Q\subset P$ with $|Q|=\frac{1}{\beta_\epsilon}\log\frac{z}{\eta}$ uniformly at random.
\item Select the point $q\in Q$ that is farthest to $o_i$, and add it to $T$.
\item If $||q-o_i||< h$, stop the loop and output ``yes''.
\item $i=i+1$; if $i>z$, stop the loop and output ``no''.
\end{enumerate}
\end{algorithmic}
\end{algorithm}

\begin{lemma}
\label{lem-sample2}
If $h\geq (1+\epsilon)Rad(P)$, Algorithm~\ref{alg-meb2} returns ``yes''; else if $h<(1-\epsilon)Rad(P)$, Algorithm~\ref{alg-meb2} returns ``no'' with probability at least $1-\eta$.
\end{lemma}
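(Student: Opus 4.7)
The plan is to prove the two cases separately. For $h \geq (1+\epsilon)Rad(P)$, I would argue by contradiction using the core-set termination bound of Theorem~\ref{the-newbc} together with the relaxation in Remark~\ref{rem-newbc}(ii). For $h < (1-\epsilon)Rad(P)$, I would exploit the stability hypothesis to show that a fraction strictly greater than $\beta_\epsilon$ of the points of $P$ lies outside $\mathbb{B}(o_i, h)$ at every iteration, and then combine Lemma~\ref{cla-sampling} with a union bound over the at most $z$ iterations.

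First I would handle $h \geq (1+\epsilon)Rad(P)$. Suppose for contradiction that the algorithm outputs ``no''. Then in every iteration $i = 1, \ldots, z$, step (4) failed, so the point $q \in Q \subset P$ added to $T$ satisfies $||q - o_i|| \geq h \geq (1+\epsilon)Rad(P)$. Since the approximate center $o_i$ is computed with guarantee $||o_i - c_i|| < s \tfrac{\epsilon}{1+\epsilon} r_i$ using $s = 1/3$ as prescribed, we are precisely in the setting of Remark~\ref{rem-newbc}(ii): at each iteration a point of $P$ at distance at least $(1+\epsilon)Rad(P)$ from $o_i$ is added to $T$. Theorem~\ref{the-newbc} then caps the final core-set size at $z$, but in the assumed ``no'' branch the set $T$ grows to size $z+1$, a contradiction. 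Hence some iteration must trigger ``yes''.

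Next I would handle $h < (1-\epsilon)Rad(P)$. Fix any iteration $i$ and set $P_i = P \cap \mathbb{B}(o_i, h)$. If $|P_i| \geq (1 - \beta_\epsilon) n$, Definition~\ref{def-stable} would imply $Rad(P_i) \geq (1-\epsilon) Rad(P) > h$, contradicting the fact that $P_i$ sits inside a ball of radius $h$. Thus $|P \setminus \mathbb{B}(o_i, h)| > \beta_\epsilon n$. Invoking Lemma~\ref{cla-sampling} with $N = P$, $N' = P \setminus \mathbb{B}(o_i, h)$, $\alpha = \beta_\epsilon$, and failure parameter $\eta/z$ shows that the sample $Q$ of size $\frac{1}{\beta_\epsilon} \log \frac{z}{\eta}$ meets $N'$ with probability at least $1 - \eta/z$. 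In that event the farthest $q \in Q$ from $o_i$ satisfies $||q - o_i|| \geq h$, so step (4) does not fire. A union bound over the at most $z$ iterations yields a total failure probability at most $\eta$, so with probability at least $1 - \eta$ the algorithm falls through every iteration and outputs ``no''.

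The main obstacle I anticipate is the careful accounting in the first case: one must verify that replacing ``farthest point in $P$'' by ``farthest point in $Q \subset P$'' does not break the applicability of Theorem~\ref{the-newbc}. The crucial point is that step (4) failing only requires the \emph{existence} of a point of $P$ (namely $q$ itself) at the required distance, which is exactly what Remark~\ref{rem-newbc}(ii) needs; it does not require $q$ to be the farthest such point. Everything else --- the sample size $\frac{1}{\beta_\epsilon}\log\frac{z}{\eta}$ being exactly what the per-iteration probability $\eta/z$ and the union bound demand, and the handling of boundary inequalities between $h$ and $(1 \pm \epsilon) Rad(P)$ --- is routine bookkeeping.
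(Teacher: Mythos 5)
Your proof is correct and takes essentially the same route as the paper's: the first case invokes Remark~\ref{rem-newbc}(\rmnum{2}) and the core-set size bound of Theorem~\ref{the-newbc} (the paper states this directly rather than as an explicit contradiction), and the second case uses stability to show more than $\beta_\epsilon n$ points lie outside $\mathbb{B}(o_i,h)$, then applies Lemma~\ref{cla-sampling} with a union bound over the $z$ iterations. The only cosmetic difference is that the paper works with the ball covering exactly $(1-\beta_\epsilon)n$ points while you bound the complement of $\mathbb{B}(o_i,h)$; the two framings are equivalent.
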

\begin{proof}
First, we assume that $h\geq (1+\epsilon)Rad(P)$. 
Recall the remark following Theorem~\ref{the-newbc}. 
If we always add a point $q$ with distance at least $h\geq (1+\epsilon)Rad(P)$ to $o_i$, the loop 2(1)-(5) cannot continue more than $z$ iterations, {\em i.e.}, Algorithm~\ref{alg-meb2} will return ``yes''.

Now, we consider the case $h<(1-\epsilon)Rad(P)$. Similar to the proof of Lemma~\ref{lem-upper}, we consider the ball $\mathbb{B}(o_i, l)$ covering exactly $(1-\beta_\epsilon)n$ points of $P$. We know that $l\geq (1-\epsilon)Rad(P)>h$ according to Definition~\ref{def-stable}. Also, with probability $1-\eta/z$, the sample $Q$ contains at least one point outside $\mathbb{B}(o_i, l)$ due to Lemma~\ref{cla-sampling}. By taking the union bound, with probability $(1-\eta/z)^z\geq 1-\eta$, $||q-o_i||$ is always larger than $h$ and eventually Algorithm~\ref{alg-meb2} will return ``no''.
\qed\end{proof}

\begin{proof}[\textbf{of Theorem~\ref{the-sample2}}]
Since Algorithm~\ref{alg-meb2} returns ``no'' when $h=(1+\epsilon)^{i_0} (1-\epsilon)a$ and returns ``yes'' when $h=(1+\epsilon)^{i_0+1} (1-\epsilon)a$, we know that 
\begin{eqnarray}
(1+\epsilon)^{i_0} (1-\epsilon)a&<&(1+\epsilon)Rad(P); \label{for-the-sample2-2}\\
(1+\epsilon)^{i_0+1} (1-\epsilon)a&\geq& (1-\epsilon)Rad(P),\label{for-the-sample2-1}
\end{eqnarray}
from Lemma~\ref{lem-sample2}. The above inequalities together imply that
\begin{eqnarray}
\frac{(1+\epsilon)^3}{1-\epsilon}Rad(P)>(1+\epsilon)^{i_0+2}a\geq (1+\epsilon)Rad(P).\label{for-the-sample2-3}
\end{eqnarray}
Thus, when running Algorithm~\ref{alg-meb2} with $h=(1+\epsilon)^{i_0+2}a$ in Step 3, the algorithm returns ``yes'' (by the right hand-side of (\ref{for-the-sample2-3})). Then, consider the ball $\mathbb{B}(\tilde{o}, h)$. We claim that $|P\setminus\mathbb{B}(\tilde{o}, h)|<\beta_\epsilon n$. Otherwise, the sample $Q$ contains at least one point outside $\mathbb{B}(\tilde{o}, h)$ with probability $1-\eta/z$ in Step 2(2) of Algorithm~\ref{alg-meb2}, {\em i.e.,} the loop will continue. Thus, it  contradicts to the fact that the algorithm returns ``yes''. Let 
$P'=P\cap\mathbb{B}(\tilde{o}, h)$, and then $|P'|\geq(1-\beta_\epsilon)n$. Moreover, the left hand-side of (\ref{for-the-sample2-3}) indicates that
\begin{eqnarray}
h=(1+\epsilon)^{i_0+2}a\leq(1+\frac{8\epsilon}{1-\epsilon})Rad(P).\label{for-the-sample2-4}
\end{eqnarray}
Now, we can apply Theorem~\ref{the-stable}, where the only difference is that we replace the ``$\epsilon$'' by ``$\frac{8\epsilon}{1-\epsilon}$'' in the theorem. Let $o$ be the center of $MEB(P)$. Consequently, we have
\begin{eqnarray}
||\tilde{o}-o||\leq (4+4\sqrt{2})\sqrt{\frac{\epsilon}{1-\epsilon}} Rad(P).\label{for-the-sample2-5}
\end{eqnarray}

For simplicity, we let $x_1=\frac{8\epsilon}{1-\epsilon}$ and $x_2=(4+4\sqrt{2})\sqrt{\frac{\epsilon}{1-\epsilon}}$. Hence, $h\leq (1+x_1)Rad(P)$ and $||\tilde{o}-o||\leq x_2Rad(P)$ in (\ref{for-the-sample2-4}) and (\ref{for-the-sample2-5}). From (\ref{for-the-sample2-5}), we know that $P\subset\mathbb{B}(\tilde{o}, (1+x_2)Rad(P))$.  From the right hand-side of (\ref{for-the-sample2-3}), we know that $(1+x_2)Rad(P)\leq\frac{1+x_2}{1+\epsilon}h$. Thus, we have 
\begin{eqnarray}
P\subset\mathbb{B}\Big(\tilde{o}, \frac{1+x_2}{1+\epsilon}h\Big) \label{for-the-sample2-6},
\end{eqnarray}
where $\frac{1+x_2}{1+\epsilon}h=\frac{1+(4+4\sqrt{2})\sqrt{\frac{\epsilon}{1-\epsilon}}}{1+\epsilon}h$. Also, the radius
\begin{eqnarray}
\frac{1+x_2}{1+\epsilon}h&\leq&\frac{(1+x_2)(1+x_1)}{1+\epsilon}Rad(P)=\lambda Rad(P).
\end{eqnarray}
This means that  $\mathbb{B}\Big(\tilde{o}, \frac{1+x_2}{1+\epsilon}h\Big)$ is a $\lambda$-approximate MEB of $P$; $\lambda=1+O(\sqrt{\epsilon})$ if $\epsilon$ is small enough.

\vspace{0.05in}
\textbf{Success probability.} The success probability of Algorithm~\ref{alg-meb2} is $1-\eta$. In Algorithm~\ref{alg-meb3}, we set $\eta=\frac{\eta_0}{2\log w}$ in Step 1 and $\eta=\eta_0/2$ in Step 3, respectively. We take the union bound and the success probability of Algorithm~\ref{alg-meb3} is $(1-\frac{\eta_0}{2\log w})^{\log w} (1-\eta_0/2)>1-\eta_0$.

\vspace{0.05in}
\textbf{Running time.} As the subroutine, Algorithm~\ref{alg-meb2} runs in $O(z(\frac{1}{\beta_\epsilon}(\log\frac{z}{\eta}) d+\frac{1}{\epsilon^3}d))$ time; Algorithm~\ref{alg-meb3} calls the subroutine $O\big(\log(\frac{1}{\epsilon}\log \frac{1}{1-\epsilon})\big)$ times. Note that $z=O(\frac{1}{\epsilon})$. Thus, the total running time  is $\tilde{O}\big((\frac{1}{\epsilon\beta_\epsilon}+\frac{1}{\epsilon^4})d\big)$. 
\qed
\end{proof}

\section{Sub-linear Time Algorithm for MEB with Outliers (sketch)}
\label{sec-outlier-stable2}
\begin{wrapfigure}{r}{0.25\textwidth}
 \vspace{-0.25in}
\begin{center}
    \includegraphics[width=0.22\textwidth]{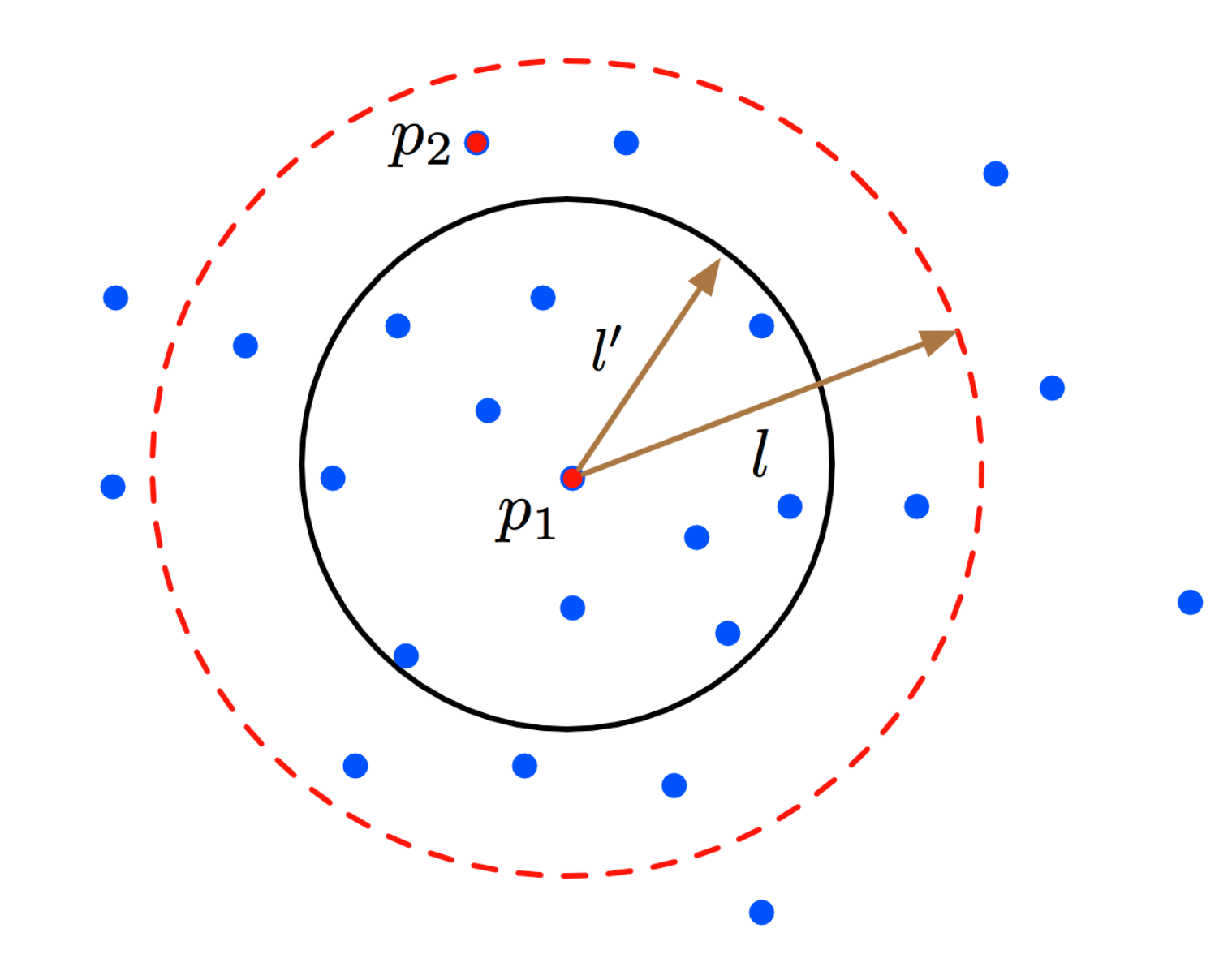}  
    \end{center}
  \vspace{-0.35in}
  \caption{}     
   \label{fig-lem7-2}
     \vspace{-0.3in}
\end{wrapfigure}
 Our result for MEB with outliers is an extension of Theorem~\ref{the-sample2-easy}, but needs a more complicated analysis. A key step is to estimate the range of $Rad(P_{opt})$. In Lemma~\ref{lem-upper}, we can estimate the range of $Rad(P)$ via a simple sampling procedure. However, this idea cannot be applied to the case with outliers, since the farthest sampled point $p_2$ could be an outlier. To address this issue, we imagine two balls centered at $p_1$ with two carefully chosen radii $l$ and $l'$ (see Figure~\ref{fig-lem7-2}). Recall in the proof of Lemma~\ref{lem-upper}, we only consider one ball $\mathbb{B}(p_1, l)$ as in Figure~\ref{fig-lem4}. Intuitively, these two balls, $\mathbb{B}(p_1, l)$ and $\mathbb{B}(p_1, l')$, guarantee a large enough gap such that there exists at least one sampled point, say $p_2$, falling in the ring between the two spheres.  Moreover, together with the stability property described in Definition~\ref{def-outlier-stable}, we show that the distance $||p_1-p_2||$ can provide a range of $Rad(P_{opt})$. \textbf{Due to the space limit, we leave the details to Section~\ref{sec-outlier-stable}.}

\newpage
\bibliographystyle{abbrv}

\bibliography{stability}


\section{Proof of Theorem~\ref{the-newbc}\cite{DBLP:journals/corr/abs-2004-10090}}
\label{sec-proof-newbc}
To ensure the expected improvement in each iteration of the algorithm of~\cite{badoiu2003smaller}, they showed that the following two inequalities hold if the algorithm always selects the farthest point to the current center of $MEB(T)$:
\begin{eqnarray}
r_{i+1}  \geq  (1+\epsilon)Rad(P)-L_i; \hspace{0.2in} r_{i+1} \geq  \sqrt{r^2_i+L^2_i},\label{for-bc2}
 \end{eqnarray}
where $r_i$ and $r_{i+1}$ are the radii of $MEB(T)$ in the $i$-th and $(i+1)$-th iterations, respectively, and $L_i$ is the shifting distance of the center of $MEB(T)$ from the $i$-th to $(i+1)$-th iteration.

However, we often compute only an approximate $MEB(T)$ in each iteration. In the $i$-th iteration, we let $c_i$ and $o_i$ denote the centers of the exact and the approximate $MEB(T)$, 
respectively. Suppose that $||c_i-o_i||\leq \xi r_i$, where $\xi\in (0,\frac{\epsilon}{1+\epsilon})$ (we will see why  this bound is needed later). Note that we only compute $o_i$ rather than $c_i$ in each iteration. As a consequence, we can only select the farthest point (say $q$) to $o_i$. If $||q-o_i||\leq (1+\epsilon)Rad(P)$,  we are done and a $(1+\epsilon)$-approximation of MEB is already obtained. Otherwise, we have
\begin{eqnarray}
(1+\epsilon)Rad(P)&<& ||q-o_i||\nonumber\\
&\leq& ||q-c_{i+1}||+||c_{i+1}-c_i||+||c_i-o_i||\nonumber\\
&\leq& r_{i+1}+L_i+\xi r_i \label{for-bc-1}
\end{eqnarray}
by  the triangle inequality. In other words, we should replace the first inequality of (\ref{for-bc2}) by $r_{i+1} > (1+\epsilon)Rad(P)-L_i-\xi r_i$. Also, the second inequality of (\ref{for-bc2}) still holds since it depends only on the property of the exact MEB (see Lemma 2.1 in~\cite{badoiu2003smaller}). Thus,  we have 
\begin{eqnarray}
r_{i+1}\geq \max\Big\{\sqrt{r^2_i+L^2_i}, (1+\epsilon)Rad(P)-L_i-\xi r_i\Big\}.\label{for-bc4}
\end{eqnarray}

Similar to the analysis in~\cite{badoiu2003smaller}, we let $\lambda_i=\frac{r_i}{(1+\epsilon)Rad(P)}$. Because $r_i$ is the radius of $MEB(T)$ and $T\subset P$,  we know $r_i\leq Rad(P)$ and then $\lambda_i\leq1/(1+\epsilon)$. By simple calculation, we know that when $L_i=\frac{\big((1+\epsilon)Rad(P)-\xi r_i\big)^2-r^2_i}{2\big((1+\epsilon)Rad(P)-\xi r_i\big)}$ the lower bound of $r_{i+1}$ in (\ref{for-bc4}) achieves the minimum value. Plugging this value of $L_i$ into (\ref{for-bc4}), we have
\begin{eqnarray}
\lambda^2_{i+1}\geq \lambda^2_i+\frac{\big((1-\xi\lambda_i)^2-\lambda^2_i\big)^2}{4(1-\xi\lambda_i)^2}.\label{for-bc5}
\end{eqnarray}
To simplify  inequality (\ref{for-bc5}), we consider the function $g(x)=\frac{(1-x)^2-\lambda^2_i}{1-x}$, where $0<x<\xi$. Its derivative $g'(x)=-1-\frac{\lambda^2_i}{(1-x)^2}$ is always negative, thus we have
\begin{eqnarray}
g(x)\geq g(\xi)=\frac{(1-\xi)^2-\lambda^2_i}{1-\xi}. \label{for-bc2-1}
\end{eqnarray}
Because $\xi<\frac{\epsilon}{1+\epsilon}$ and $\lambda_i\leq 1/(1+\epsilon)$, we know that  the right-hand side of (\ref{for-bc2-1}) is always non-negative. Using (\ref{for-bc2-1}), inequality (\ref{for-bc5}) can be simplified to 
\begin{eqnarray}
\lambda^2_{i+1}&\geq& \lambda^2_i+\frac{1}{4}\big(g(\xi)\big)^2\nonumber\\
&=&\lambda^2_i+\frac{\big((1-\xi)^2-\lambda^2_i\big)^2}{4(1-\xi)^2}.\label{for-bc2-2}
\end{eqnarray}
(\ref{for-bc2-2}) can be further rewritten as 
\begin{eqnarray}
\Big(\frac{\lambda_{i+1}}{1-\xi}\Big)^2&\geq&\frac{1}{4}\Big(1+(\frac{\lambda_{i}}{1-\xi})^2\Big)^2 \nonumber\\
\Longrightarrow  \frac{\lambda_{i+1}}{1-\xi}&\geq&\frac{1}{2}\Big(1+(\frac{\lambda_{i}}{1-\xi})^2\Big).\label{for-bc2-3}
\end{eqnarray}

Now, we can apply a similar transformation of $\lambda_i$ which was used in~\cite{badoiu2003smaller}. Let $\gamma_i=\frac{1}{1-\frac{\lambda_i}{1-\xi}}$.  We know $\gamma_i>1$ (note $0\leq\lambda_i\leq\frac{1}{1+\epsilon}$ and $\xi<\frac{\epsilon}{1+\epsilon}$). Then, (\ref{for-bc2-3}) implies that 
\begin{eqnarray}
\gamma_{i+1}&\geq&\frac{\gamma_i}{1-\frac{1}{2\gamma_i}}\nonumber\\
&=&\gamma_i\big(1+\frac{1}{2\gamma_i}+(\frac{1}{2\gamma_i})^2+\cdots\big)\nonumber\\
&>&\gamma_i+\frac{1}{2}, \label{for-bc2-4}
\end{eqnarray}
where the equation comes from the fact that $\gamma_i>1$ and thus $\frac{1}{2\gamma_i}\in(0,\frac{1}{2})$. Note that $\lambda_0=0$ and thus $\gamma_0=1$. As a consequence, we have $\gamma_i>1+\frac{i}{2}$. In addition, since $\lambda_i\leq\frac{1}{1+\epsilon}$, that is, $\gamma_i\leq\frac{1}{1-\frac{1}{(1+\epsilon)(1-\xi)}}$, we have
\begin{eqnarray}
i< \frac{2}{\epsilon-\xi-\epsilon\xi}=\frac{2}{(1-\frac{1+\epsilon}{\epsilon}\xi)\epsilon}.\label{for-bc2-5}
\end{eqnarray}

Consequently, we obtain the theorem.

\section{Lemma 2.2 in \cite{BHI}}
\label{sec-bhi}

\begin{lemma}[\cite{BHI}]
Let $\mathbb{B}(c, r)$ be a minimum enclosing ball of a point set $P\subset\mathbb{R}^d$, then any closed half-space that contains $c$, must also contain at least a point from $P$ that is at distance $r$ from $c$.
\end{lemma}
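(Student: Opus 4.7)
The plan is to argue by contradiction via a small perturbation of the center $c$. Suppose a closed half-space $H$ contains $c$ but no point of $P$ at distance exactly $r$ from $c$. Write $H=\{x\in\mathbb{R}^d:\langle x-x_0,v\rangle\ge 0\}$ for some unit normal $v$ and some $x_0$ on its bounding hyperplane, and set $E:=\{p\in P:\|p-c\|=r\}$ (the ``contact'' points on the boundary sphere of the MEB). Since $c\in H$ and $E\cap H=\emptyset$, a short calculation shows that every $p\in E$ satisfies $\langle p-c,v\rangle<0$. The goal is to exhibit a new center at which every point of $P$ lies strictly within distance $r$, contradicting minimality.

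The natural move is to shift the center in the direction $-v$, setting $c_t:=c-tv$ for small $t>0$. Expanding
\begin{equation*}
\|p-c_t\|^2=\|p-c\|^2+2t\langle p-c,v\rangle+t^2
\end{equation*}
shows that for $p\in E$ the strict negativity of $\langle p-c,v\rangle$ drives $\|p-c_t\|$ strictly below $r$ for all sufficiently small $t$, while for $p\in P\setminus E$ the inequality $\|p-c\|<r$ and continuity keep $\|p-c_t\|$ below $r$ as well. The key bookkeeping is to choose a single $t$ that works for every point at once. Using that $P$ is a finite set of $n$ points, I would extract a uniform gap $\delta>0$ with $\langle p-c,v\rangle\le-\delta$ for all $p\in E$, and a uniform slack $\rho>0$ with $\|p-c\|\le r-\rho$ for all $p\in P\setminus E$; then any $t$ smaller than both $2\delta$ and $\rho$ forces $\max_{p\in P}\|p-c_t\|<r$, producing an enclosing ball of strictly smaller radius and contradicting the fact that $\mathbb{B}(c,r)$ is the MEB.

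The only step I expect to require any care is the passage from the pointwise strict inequalities to the uniform constants $\delta$ and $\rho$, which is where the finiteness of $P$ (standing throughout the paper) is used; everything else is a one-line computation. A conceptually equivalent alternative would be to invoke the standard fact that the MEB center lies in the convex hull of its contact points $E$ and then apply the separating hyperplane theorem, but the direct perturbation argument above is shorter and self-contained, and it makes transparent why moving $c$ toward the ``unopposed'' side $-v$ is exactly the move that shrinks the ball.
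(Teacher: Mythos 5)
Your proof is correct. Note that the paper itself does not prove this lemma---it only restates Lemma~2.2 of~\cite{BHI} for completeness---and your perturbation argument (shift the center to $c_t=c-tv$, use the strict inequality $\langle p-c,v\rangle<0$ for the contact points outside $H$ together with the slack $\|p-c\|<r$ for the non-contact points, and extract uniform constants $\delta,\rho$ by finiteness of $P$) is essentially the standard proof given in that reference, so there is nothing substantive to compare. The only point worth making explicit is the degenerate case $E=\emptyset$, which your argument handles vacuously since then the radius can be shrunk outright.
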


\section{Sub-linear Time Algorithm for MEB with Outliers}
\label{sec-outlier-stable}
 %
The result in this section is an extension of Theorem~\ref{the-sample2-easy}, but needs a more complicated analysis. A key step is to estimate the range of $Rad(P_{opt})$. In Lemma~\ref{lem-upper}, we can estimate the range via a simple sampling procedure. However, this idea cannot be applied to the case with outliers, since the farthest sampled point $p_2$ could be an outlier. We briefly introduce our idea below. 

\textbf{High level idea:}  To estimate the range of $Rad(P_{opt})$, we imagine two balls centered at $p_1$ with two appropriate radii (see Figure~\ref{fig-lem7}). Recall in the proof of Lemma~\ref{lem-upper}, we only consider one ball $\mathbb{B}(p_1, l)$ as in Figure~\ref{fig-lem4}. Intuitively, these two balls in Figure~\ref{fig-lem7} guarantee a large enough gap such that there exists at least one sampled point, say $p_2$, falling in the ring between the two spheres.  Moreover, together with the stability property described in Definition~\ref{def-outlier-stable}, we show that the distance $||p_1-p_2||$ can provide a range of $Rad(P_{opt})$ in Lemma~\ref{lem-outlier-estimate}. 


\begin{lemma}
\label{lem-outlier-estimate}
Let $(P,\gamma)$ be a $\beta_\epsilon$-stable instance of MEB with outliers, and $p_1$ be a point randomly selected from $P$. Let $Q$ be a random sample from $P$ with 
size $|Q|=O\big(\max\{\frac{1}{\beta_\epsilon},\frac{1}{\gamma}\}\times\frac{(2\gamma+\beta_\epsilon)^2}{\beta_\epsilon^2}\log\frac{1}{\eta}\big)$ for a given $\eta\in (0,1)$.
Then, if $p_2$ is the $t$-th farthest point to $p_1$ in $Q$, where $t=\frac{2\gamma+2\beta_\epsilon}{2\gamma+\beta_\epsilon}\gamma |Q|+1$, the following holds with probability $(1-\eta)(1-\gamma)$, 
\begin{eqnarray}
Rad(P_{opt})\in [\frac{1}{2}||p_1-p_2||, \frac{1}{1-\epsilon}||p_1-p_2||].
\end{eqnarray}
\end{lemma}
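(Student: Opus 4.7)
The goal is to show $\|p_1-p_2\| \in [(1-\epsilon) Rad(P_{opt}),\, 2 Rad(P_{opt})]$, which is equivalent to the stated interval for $Rad(P_{opt})$. My strategy is (i) condition on $p_1 \in P_{opt}$, which happens with probability $1-\gamma$; (ii) partition $P$ into three concentric zones around $p_1$ whose sizes are controlled by the stability assumption together with the triangle inequality; and (iii) use a concentration argument to show that, with probability at least $1-\eta$, the $t$-th farthest sample $p_2$ falls into the intended ``middle'' zone. The claimed success probability $(1-\eta)(1-\gamma)$ then follows from independence of the two events (given $p_1 \in P_{opt}$, the sample $Q$ is still uniform on $P$).

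\textbf{Setting up the three zones.} Assume $p_1 \in P_{opt}$. Since any two points of $P_{opt}$ are within distance $2 Rad(P_{opt})$, the ``far'' set $P_{far} := \{p \in P : \|p-p_1\| > 2 Rad(P_{opt})\}$ is disjoint from $P_{opt}$, so $|P_{far}| \leq \gamma n$. For the ``close'' side, let $l$ be the smallest radius with $|\mathbb{B}(p_1,l) \cap P| \geq (1-\gamma-\beta_\epsilon)n$ and set $P_{close} = \mathbb{B}(p_1,l)\cap P$. Since $|P_{close}| \geq (1-\gamma-\beta_\epsilon)n$, Definition~\ref{def-outlier-stable} gives $Rad(P_{close}) \geq (1-\epsilon) Rad(P_{opt})$, while $P_{close} \subseteq \mathbb{B}(p_1,l)$ forces $Rad(P_{close}) \leq l$; hence $l \geq (1-\epsilon) Rad(P_{opt})$. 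Finally define $B_{out} = P \setminus P_{close}$, so $|B_{out}| \geq (\gamma+\beta_\epsilon)n$, and note that $P_{far} \subseteq B_{out}$ (since $l \leq 2 Rad(P_{opt})$, as $P_{opt}$ already contributes $\geq (1-\gamma)n - 1 \geq (1-\gamma-\beta_\epsilon)n$ close-by points).

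\textbf{Rank choice and Chernoff bounds.} For $p_2$ (the $t$-th farthest in $Q$) to lie in $B_{out} \setminus P_{far}$ it suffices that simultaneously $|Q \cap P_{far}| < t$ and $|Q \cap B_{out}| \geq t$. The stated choice $t = \tfrac{2\gamma+2\beta_\epsilon}{2\gamma+\beta_\epsilon}\gamma|Q|+1$ is tailored so that $t/|Q|$ lies strictly between the two base rates; a direct computation gives
\begin{equation*}
\tfrac{t}{|Q|} - \gamma = \tfrac{\gamma\beta_\epsilon}{2\gamma+\beta_\epsilon}, \qquad (\gamma+\beta_\epsilon) - \tfrac{t}{|Q|} = \tfrac{(\gamma+\beta_\epsilon)\beta_\epsilon}{2\gamma+\beta_\epsilon}.
\end{equation*}
These positive gaps are precisely the multiplicative deviations $\delta = \tfrac{\beta_\epsilon}{2\gamma+\beta_\epsilon}$ and $\delta' = \tfrac{\beta_\epsilon}{2\gamma+\beta_\epsilon}$ (relative to means $\gamma|Q|$ and $(\gamma+\beta_\epsilon)|Q|$ respectively) required for multiplicative Chernoff. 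Plugging in, each deviation probability is bounded by $\exp\bigl(-\Omega\bigl(\tfrac{\beta_\epsilon^2}{(2\gamma+\beta_\epsilon)^2}\cdot\min\{\gamma,\gamma+\beta_\epsilon\}\cdot|Q|\bigr)\bigr)$, and the stated sample size $|Q| = O\bigl(\max\{\tfrac{1}{\beta_\epsilon},\tfrac{1}{\gamma}\}\cdot\tfrac{(2\gamma+\beta_\epsilon)^2}{\beta_\epsilon^2}\log\tfrac{1}{\eta}\bigr)$ drives each failure probability below $\eta/2$; a union bound yields both events with probability $\geq 1-\eta$.

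\textbf{Finish and main obstacle.} On the joint success event, $p_2 \in B_{out} \setminus P_{far}$, so $(1-\epsilon) Rad(P_{opt}) \leq l < \|p_1-p_2\| \leq 2 Rad(P_{opt})$, which rearranges to the claimed bracket for $Rad(P_{opt})$. The main obstacle is the Chernoff step: one must (a) recognize that the harder of the two tail bounds is the one for $P_{far}$ where the base rate is $\gamma$ rather than $\gamma+\beta_\epsilon$, which is exactly what introduces the $\max\{1/\beta_\epsilon, 1/\gamma\}$ factor in the sample complexity, and (b) handle the potential edge case that $|P_{close}|$ strictly exceeds $(1-\gamma-\beta_\epsilon)n$ because of ties at distance $l$ (easily resolved by breaking ties arbitrarily so that exactly $\lceil(1-\gamma-\beta_\epsilon)n\rceil$ points are counted as close). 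Everything else --- the conditioning, the zone partition, and the final rearrangement --- is elementary bookkeeping.
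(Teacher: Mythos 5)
Your proof is correct and follows the same skeleton as the paper's: condition on $p_1\in P_{\textnormal{opt}}$, set up two concentric regions around $p_1$, use Definition~\ref{def-outlier-stable} to lower-bound the inner radius by $(1-\epsilon)Rad(P_{\textnormal{opt}})$, and use Chernoff bounds to place the rank-$t$ sample in the gap between the two base rates $\gamma$ and $\gamma+\beta_\epsilon$. The one genuine structural difference is your choice of the outer region. The paper takes the outer ball to be $\mathbb{B}(p_1,l)$ with $|P\cap\mathbb{B}(p_1,l)|=(1-\gamma)n$, so the middle zone is the ring $B$ with $|B|=\beta_\epsilon n$; the price is that $l\leq 2Rad(P_{\textnormal{opt}})$ is not automatic, and the paper needs a two-case analysis (on whether $A\cap P_{\textnormal{opt}}$ is empty, introducing the auxiliary point $p_3$) to get the upper bound $\|p_1-p_2\|\leq 2Rad(P_{\textnormal{opt}})$. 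You instead define the far zone directly as $\{p:\|p-p_1\|>2Rad(P_{\textnormal{opt}})\}$, which is disjoint from $P_{\textnormal{opt}}$ and hence has at most $\gamma n$ points; the upper bound on $\|p_1-p_2\|$ then comes for free, and the one-sided Chernoff bound only needs $|P_{far}|\leq\gamma n$ rather than equality. This is a cleaner route to the upper half of the bracket and avoids the case analysis entirely. Two small points: the inequality ``$|B_{out}|\geq(\gamma+\beta_\epsilon)n$'' is backwards as first written (it follows only after your tie-breaking convention fixes $|P_{close}|$ at exactly $\lceil(1-\gamma-\beta_\epsilon)n\rceil$, which you do address), and your observation that only the $1/\gamma$ branch of $\max\{1/\beta_\epsilon,1/\gamma\}$ is actually needed in your version is correct --- the $1/\beta_\epsilon$ term arises in the paper only because it separately controls the count in the ring $B$ of measure $\beta_\epsilon$.
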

\begin{proof}

\begin{figure}[]
\vspace{-0.1in}
   \centering
  \includegraphics[height=1.6in]{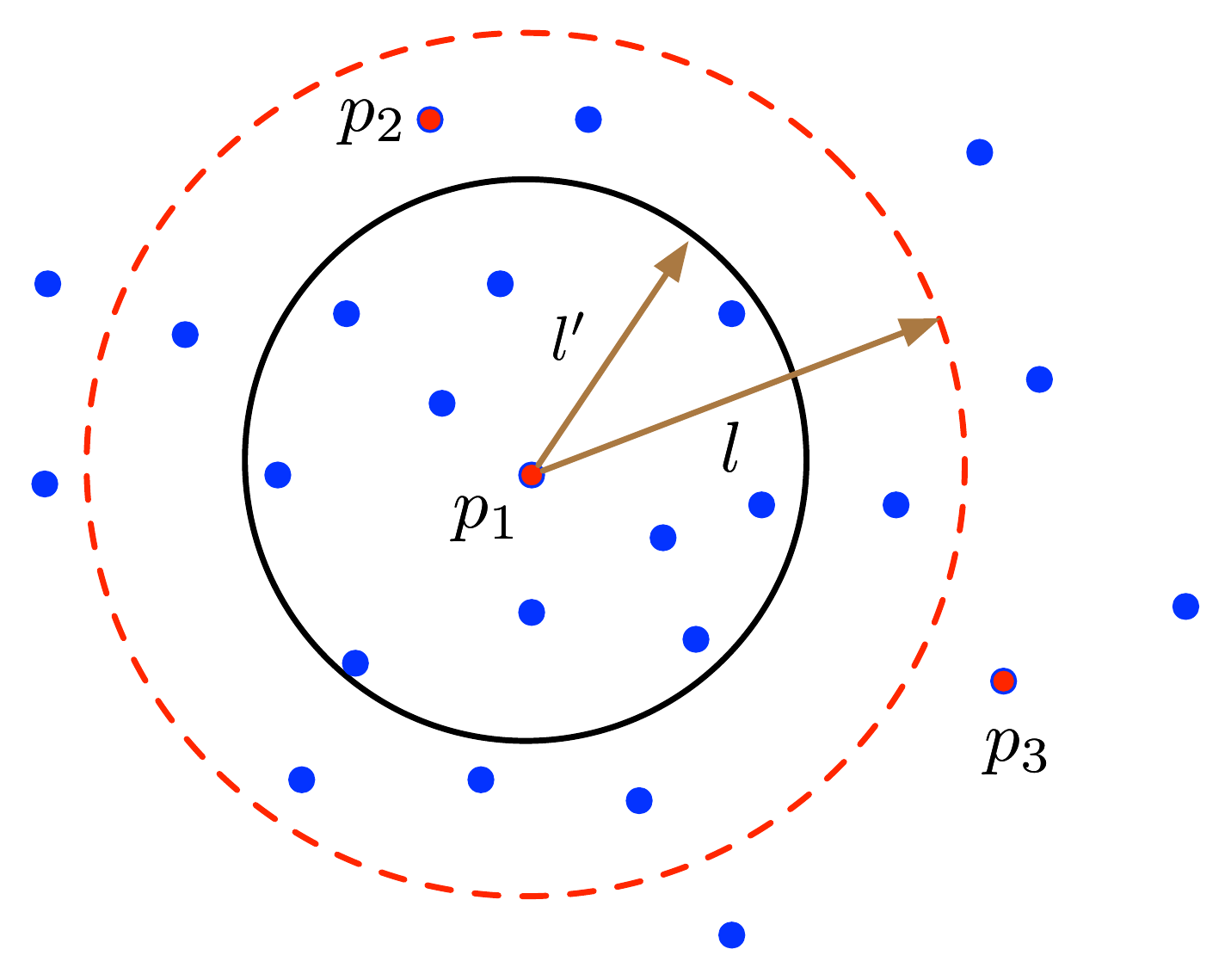}
  \vspace{-0.1in}
      \caption{An illustration of Lemma~\ref{lem-outlier-estimate}.}
  \label{fig-lem7}
  \vspace{-0.1in}
\end{figure}

First, we assume that $p_1\in P_{opt}$ (note that this happens with probability $1-\gamma$). We consider two balls $\mathbb{B}(p_1, l')$ and $\mathbb{B}(p_1, l)$ such that 
\begin{eqnarray}
|P\cap\mathbb{B}(p_1, l')|&=&(1-\gamma-\beta_\epsilon)n; \label{for-outlier-e1}\\
|P\cap\mathbb{B}(p_1, l)|&=&(1-\gamma)n.\label{for-outlier-e2}
\end{eqnarray}
That is, $\mathbb{B}(p_1, l)$ contains $\beta_\epsilon n$ more points than $\mathbb{B}(p_1, l')$ from $P$ (see Figure~\ref{fig-lem7}). Further, we define two subsets $A=P\setminus \mathbb{B}(p_1, l)$ and $B=P\cap(\mathbb{B}(p_1, l)\setminus \mathbb{B}(p_1, l'))$. Therefore, $|A|=\gamma n$ and $|B|=\beta_\epsilon n$. 

Now, suppose that we randomly sample $m$ points $Q$ from $P$, where the value of $m$ will be determined later. Let $\{x_i\mid 1\leq i\leq m\}$ be $m$ independent random variables with $x_i=1$ if the $i$-th sampled point belongs to $A$, and $x_i=0$ otherwise. Thus, $E[x_i]=\gamma$ for each $i$. Let $\sigma$ be a small parameter in $(0,1)$. By using the Chernoff bound, we have  $\textbf{Pr}\Big(\sum^m_{i=1}x_i\notin (1\pm\sigma)\gamma m\Big)\leq e^{-O(\sigma^2 m\gamma)}$. That is,
\begin{eqnarray}
\textbf{Pr}\Big(|Q\cap A|\in (1\pm\sigma)\gamma m\Big)\geq 1-e^{-O(\sigma^2 m\gamma)}. \label{for-outlier-e3}
\end{eqnarray}
Similarly, we have
\begin{eqnarray}
\textbf{Pr}\Big(|Q\cap B|\in (1\pm\sigma)\beta_\epsilon m\Big)\geq 1-e^{-O(\sigma^2 m\beta_\epsilon)}. \label{for-outlier-e4}
\end{eqnarray}
Consequently, if $m=O(\max\{\frac{1}{\beta_\epsilon}, \frac{1}{\gamma}\}\times \frac{1}{\sigma^2}\log\frac{2}{\eta})$, with probability $(1-\frac{\eta}{2})^2>1-\eta$, we have 
\begin{eqnarray}
|Q\cap A|\in (1\pm\sigma)\gamma m \text{\hspace{0.2in } and \hspace{0.2in }} |Q\cap B|\in (1\pm\sigma)\beta_\epsilon m. \label{for-outlier-e5}
\end{eqnarray} 
Therefore, if we rank the points of $Q$ by their distances to $p_1$ decreasingly, we know that at most the top $(1+\sigma)\gamma m$ points belong to $A$, and at least the top $(1-\sigma)(\gamma+\beta_\epsilon)m$ points belong to $A\cup B$. To ensure $(1+\sigma)\gamma m<(1-\sigma)(\gamma+\beta_\epsilon)m$ ({\em i.e.}, there is a gap between $(1+\sigma)\gamma m$ and $(1-\sigma)(\gamma+\beta_\epsilon)m$), we need to set $\sigma<\frac{\beta_\epsilon}{2\gamma+\beta_\epsilon}$ ({\em e.g.}, we can set $\sigma=\frac{1}{2}\frac{\beta_\epsilon}{2\gamma+\beta_\epsilon}$). Then, we pick the $t$-th farthest point to $p_1$ from $Q$, where $t=(1+\sigma)\gamma m+1$, and denote it as $p_2$. As a consequence, $p_2\in B$ with probability $1-\eta$.

Suppose $p_2\in B$ (see Figure~\ref{fig-lem7}). From Definition~\ref{def-outlier-stable}, we have 
\begin{eqnarray}
||p_1-p_2||\geq l'\geq (1-\epsilon)Rad(P_{opt}). \label{for-outlier-e6}
\end{eqnarray}
To obtain the upper bound of $||p_1-p_2||$, we consider two cases: $A\cap P_{opt}=\emptyset$ and $A\cap P_{opt}\neq\emptyset$. For the former case, since $A=P\setminus \mathbb{B}(p_1, l)$ and $|A|=\gamma n$, we know that the whole $P_{opt}$ is covered by $\mathbb{B}(p_1, l)$ and all the points of $P\setminus P_{opt}$ are outside of $\mathbb{B}(p_1, l)$. 
Since $p_2\in B\subset\mathbb{B}(p_1, l)$,  we know $p_2\in P_{opt}$. It implies that $||p_1-p_2||\leq 2Rad(P_{opt})$. For the latter case, let $p_3\in A\cap P_{opt}$ (see Figure~\ref{fig-lem7}). Then we have 
\begin{eqnarray}
||p_1-p_2||\leq l\leq ||p_1-p_3|| \leq 2Rad(P_{opt}).
\end{eqnarray}
Thus, we have $||p_1-p_2||\leq 2Rad(P_{opt})$ for both cases. 

Overall, $p_1\in P_{opt}$ with probability $1-\gamma$ and $p_2\in  B$ with probability $1-\eta$, and thus $Rad(P_{opt})\in [\frac{1}{2}||p_1-p_2||, \frac{1}{1-\epsilon}||p_1-p_2||]$ with probability $(1-\eta)(1-\gamma)$. We set $\sigma=\frac{1}{2}\frac{\beta_\epsilon}{2\gamma+\beta_\epsilon}$, and the sample size $|Q|=m=O(\max\{\frac{1}{\beta_\epsilon}, \frac{1}{\gamma}\}\times \frac{1}{\sigma^2}\log\frac{2}{\eta})=O\big(\max\{\frac{1}{\beta_\epsilon},\frac{1}{\gamma}\}\times\frac{(2\gamma+\beta_\epsilon)^2}{\beta_\epsilon^2}\log\frac{1}{\eta}\big)$.
\qed
\end{proof}

Similar to Theorem~\ref{the-sample2-easy}, we can obtain an approximate solution of MEB with outliers via Lemma~\ref{lem-outlier-estimate}. In the proof of Lemma~\ref{lem-outlier-estimate}, we assume $p_1\in P_{opt}$, and thus $P_{opt}\subset\mathbb{B}(p_1, 2Rad(P_{opt}))$. Moreover, since $Rad(P_{opt})\in [\frac{1}{2}||p_1-p_2||, \frac{1}{1-\epsilon}||p_1-p_2||]$, we know that 
\begin{eqnarray}
 \mathbb{B}(p_1, 2Rad(P_{opt}))&\subset& \mathbb{B}(p_1, \frac{2}{1-\epsilon}||p_1-p_2||);\\
 \frac{2}{1-\epsilon}||p_1-p_2||&\leq& \frac{4}{1-\epsilon}Rad(P_{opt}). 
 \end{eqnarray}
 Note that $p_2$ can be selected from the sample $Q$ in linear time $O(|Q|d)$ by the algorithm in~\cite{blum1973time}. Thus, we have the following result.

\begin{theorem}
\label{the-outlier-sub}
In Lemma~\ref{lem-outlier-estimate}, the ball $\mathbb{B}(p_1, \frac{2}{1-\epsilon}||p_1-p_2||)$ is a $\frac{4}{1-\epsilon}$-approximation of the instance $(P, \gamma)$, with probability $(1-\eta)(1-\gamma)$. The running time for obtaining the ball is $O\big(\max\{\frac{1}{\beta_\epsilon},\frac{1}{\gamma}\}\times\frac{(2\gamma+\beta_\epsilon)^2}{\beta_\epsilon^2}(\log\frac{1}{\eta})d\big)$.
\end{theorem}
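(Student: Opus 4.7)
The plan is to combine Lemma~\ref{lem-outlier-estimate} with a direct triangle inequality argument, essentially mirroring the proof strategy of Theorem~\ref{the-sample2-easy} but accounting for the possibility that some sampled points are outliers. The first step is to condition on the success event of Lemma~\ref{lem-outlier-estimate}, which holds with probability $(1-\eta)(1-\gamma)$ and gives two facts simultaneously: (i) the randomly chosen $p_1$ lies in $P_{opt}$, and (ii) $Rad(P_{opt})\in [\tfrac{1}{2}||p_1-p_2||,\tfrac{1}{1-\epsilon}||p_1-p_2||]$. All subsequent deterministic reasoning takes place under this event.

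Next, I would argue feasibility, namely that $\mathbb{B}(p_1,\tfrac{2}{1-\epsilon}||p_1-p_2||)$ covers at least $(1-\gamma)n$ points. Since $p_1\in P_{opt}$, letting $c^*$ denote the center of $MEB(P_{opt})$, for any $q\in P_{opt}$ the triangle inequality yields $||p_1-q||\leq ||p_1-c^*||+||c^*-q||\leq 2\,Rad(P_{opt})$. Invoking the upper bound in (ii) gives $2\,Rad(P_{opt})\leq \tfrac{2}{1-\epsilon}||p_1-p_2||$, so $P_{opt}\subset \mathbb{B}(p_1,\tfrac{2}{1-\epsilon}||p_1-p_2||)$, and hence the ball covers $|P_{opt}|=(1-\gamma)n$ input points as required by Definition~\ref{def-outlier}.

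The approximation ratio then follows immediately from the lower bound in (ii): $\tfrac{2}{1-\epsilon}||p_1-p_2||\leq \tfrac{4}{1-\epsilon}\,Rad(P_{opt})$, certifying a $\tfrac{4}{1-\epsilon}$-approximation. For the running time, drawing the sample $Q$ takes $O(|Q|d)$ time, evaluating the $|Q|$ distances to $p_1$ also takes $O(|Q|d)$ time, and the $t$-th farthest point can be extracted in $O(|Q|d)$ time by the linear-time selection algorithm of Blum \emph{et al.}~\cite{blum1973time}. Substituting the value $|Q|=O\big(\max\{\tfrac{1}{\beta_\epsilon},\tfrac{1}{\gamma}\}\cdot\tfrac{(2\gamma+\beta_\epsilon)^2}{\beta_\epsilon^2}\log\tfrac{1}{\eta}\big)$ from Lemma~\ref{lem-outlier-estimate} gives the claimed bound.

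There is no significant obstacle beyond what Lemma~\ref{lem-outlier-estimate} already handles; the only subtlety worth noting is that the two ingredients ``$p_1\in P_{opt}$'' and ``the two-sided estimate on $Rad(P_{opt})$'' must be established under a \emph{single} probabilistic event, since the feasibility argument needs the former and the approximation ratio needs the latter. This is precisely why Lemma~\ref{lem-outlier-estimate} is phrased to deliver both conclusions jointly with probability $(1-\eta)(1-\gamma)$, rather than via a union bound that would degrade the guarantee.
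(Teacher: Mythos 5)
Your proposal is correct and follows essentially the same route as the paper: condition on the joint event of Lemma~\ref{lem-outlier-estimate} (which bundles $p_1\in P_{opt}$ with the two-sided estimate on $Rad(P_{opt})$), use the triangle inequality to get $P_{opt}\subset\mathbb{B}(p_1, 2Rad(P_{opt}))\subset\mathbb{B}(p_1,\frac{2}{1-\epsilon}\|p_1-p_2\|)$, bound the radius by $\frac{4}{1-\epsilon}Rad(P_{opt})$, and obtain the running time from the sample size together with linear-time selection of the $t$-th farthest point. Your closing remark about needing both ingredients under a single probabilistic event accurately reflects why the lemma is stated with the combined probability $(1-\eta)(1-\gamma)$.
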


\end{document}